\pgfplotsset{compat=newest}
\newcounter{cox}
\newcounter{coy}
\newcommand{\sudoku}[3]{%
  \begin{tikzpicture}[darkstyle/.style={draw,fill=purple!67,minimum size=#1}, lightstyle/.style={draw,fill=purple!0,minimum size=#1}, nostyle/.style={fill,fill=white, line width=0.0mm,minimum size=#1}]
  \setcounter{cox}{0}
  \setcounter{coy}{1}
  \foreach \elem in {#3} {
    \edef\nlabel{\elem}
    \if \elem X
      \node [darkstyle]  (\elem) at (3.0/#2*\thecox,-3.0/#2*\thecoy) { }; 
    \else
      \node [lightstyle]  (\elem) at (3.0/#2*\thecox,-3.0/#2*\thecoy) { }; 
    \fi
    \stepcounter{cox}
    \pgfmathtruncatemacro\coxp{\thecox+1.0}
    \ifnum\coxp>#2
      \stepcounter{coy}
      \setcounter{cox}{0}
    \fi
  }
  \end{tikzpicture}
  
}
\newcommand{\sudokucolor}[5]{%
  \begin{tikzpicture}[darkstyle/.style={draw,fill=#4,minimum size=#1}, lightstyle/.style={draw,fill=#5,minimum size=#1}, nostyle/.style={fill,fill=white, line width=0.0mm,minimum size=#1}]
  \setcounter{cox}{0}
  \setcounter{coy}{1}
  \foreach \elem in {#3} {
    \edef\nlabel{\elem}
    \if \elem X
      \node [darkstyle]  (\elem) at (1.78/#2*\thecox,-1.78/#2*\thecoy) { }; 
    \else
      \node [lightstyle]  (\elem) at (1.78/#2*\thecox,-1.78/#2*\thecoy) { }; 
    \fi
    \stepcounter{cox}
    \pgfmathtruncatemacro\coxp{\thecox+1.0}
    \ifnum\coxp>#2
      \stepcounter{coy}
      \setcounter{cox}{0}
    \fi
  }
  \end{tikzpicture}
  
}
\newcommand{\maxcolor}{purple}
\newcommand{\wernerzero}{
\sudokucolor{24}{2}{
O,O,
O,X}{\maxcolor!100}{\maxcolor!0}
}
\newcommand{\wernerone}{
\sudokucolor{24}{2}{
O,O,
O,X}{\maxcolor!67}{\maxcolor!11}
}
\newcommand{\wernertwo}{
\sudokucolor{24}{2}{
O,O,
O,X}{\maxcolor!33}{\maxcolor!22}
}
\newcommand{\wernerthree}{
\sudokucolor{24}{2}{
O,O,
O,X}{\maxcolor!0}{\maxcolor!33}
}
\newcommand{\patternfour}{
\sudoku{20.25}{4}{
X,O,O,O,
O,X,X,X,
O,O,X,O,
O,O,X,O}
}
\newcommand{\patternthree}{
\sudoku{27}{3}{
X,O,O,
O,X,X,
O,X,X}
}
\newcommand{\patternthreetwo}{
\sudoku{27}{3}{
X,O,O,
O,X,O,
O,O,X}
}
\newcommand{\patternsix}{
\sudoku{13.5}{6}{
X,X,X,X,X,.,
X,X,.,.,.,.,
.,X,.,.,.,X,
X,.,.,X,.,.,
.,.,.,X,.,X,
X,.,.,.,.,X}
}
\newcommand{\patternfoursix}{
\sudoku{13.5}{6}{
X,.,X,.,.,.,
X,X,X,.,X,.,
X,.,X,.,.,.,
.,.,.,X,.,X}
}
\newcommand{\patternfoursixtwo}{
\sudoku{13.5}{6}{
X,.,.,.,X,.,
X,X,X,X,X,.,
X,.,.,.,X,.,
.,.,.,.,.,X}
}
\newcommand{\patternthreefour}{
\sudoku{20.25}{4}{
X,.,.,.,
X,X,X,X,
.,.,.,X}
}
\newcommand{\patternthreefourtwo}{
\sudoku{20.25}{4}{
.,.,.,X,
X,X,X,X,
X,.,.,.}
}
\newcommand*\colvec[1]{
        \global\colveccount#1
        \begin{pmatrix}
        \colvecnext
}
\def\colvecnext#1{
        #1
        \global\advance\colveccount-1
        \ifnum\colveccount>0
                \\
                \expandafter\colvecnext
        \else
                \end{pmatrix}
        \fi
}
\newcounter{thmc}
\newtheorem{proposition}[thmc]{Proposition}
\newtheorem{theorem}[thmc]{Theorem}
\newtheorem{lemma}[thmc]{Lemma}
\newtheorem{definition}[thmc]{Definition}
\newcommand{\pushright}[1]{\ifmeasuring@#1\else\omit\hfill$\displaystyle#1$\fi\ignorespaces}
\global\long\def\Ad{\operatorname{Ad}}
\global\long\def\trace{\operatorname{Tr}}
\global\long\def\ketbra#1#2{\ket{#1}\!\bra{#2}}
\global\long\def\real{\operatorname{Re}}
\global\long\def\one{\mathds{1}}
\newcommand{\kommentar}[1]{}
\newcommand{\tracenorm}[1]{\| #1 \|_1}
\NewDocumentCommand\opti{smmm>{\SplitList{;}}m} {
\begingroup%
\setlength{\belowdisplayskip}{-0.6\baselineskip}%
\IfBooleanTF{#1}{%
    \begin{alignat*}{2}
        & \underset{#3}{\text{#2}} & & #4 \\
        & \text{subject to~~}
        \ProcessList{#5}{ \insertopticonst }
        & &
    \end{alignat*}%
    }{%
    \begin{alignat}{2}
        & \underset{#3}{\text{#2}} & & #4 \\
        & \text{subject to~~}
        \ProcessList{#5}{ \insertopticonst }
        & & \nonumber
    \end{alignat}%
    }%
\endgroup%
}%
\newcommand\insertopticonst[1]{& & #1\\&}
\begin{document}

\title{Bound entangled Bell diagonal states of unequal local dimensions, and their witnesses}

\author{Johannes Moerland}

\affiliation{Institut für Theoretische Physik III, Heinrich-Heine-Universität Düsseldorf, Universitätsstr.~1, D-40225 Düsseldorf, Germany}

\affiliation{Universität Göttingen, Friedrich-Hund-Platz 1, D-37073 Göttingen, Germany}

\author{Nikolai Wyderka}

\affiliation{Institut für Theoretische Physik III, Heinrich-Heine-Universität Düsseldorf, Universitätsstr.~1, D-40225 Düsseldorf, Germany}

\author{Hermann Kampermann}

\affiliation{Institut für Theoretische Physik III, Heinrich-Heine-Universität Düsseldorf, Universitätsstr.~1, D-40225 Düsseldorf, Germany}

\author{Dagmar Bruß}

\affiliation{Institut für Theoretische Physik III, Heinrich-Heine-Universität Düsseldorf, Universitätsstr.~1, D-40225 Düsseldorf, Germany}

\date{\today}

\begin{abstract}
Bell diagonal states constitute a well-studied family of bipartite quantum states that arise naturally in various contexts in quantum information. In this paper we generalize the notion of Bell diagonal states to the case of unequal local dimensions and investigate their entanglement properties. We extend the family of entanglement criteria of Sarbicki et al.~to non-Hermitian operator bases to construct entanglement witnesses for the class of generalized Bell diagonal states. We then show how to optimize the witnesses with respect to noise robustness. Finally, we construct bound entangled states that are detected by these witnesses, but not by the usual computable cross norm or realignment and de~Vicente criteria.
\end{abstract}
\maketitle

\section{Introduction}

In bipartite quantum systems, Bell diagonal states (BDS) constitute an important class of states which occur naturally in studies of nonlocality and entanglement theory \cite{batle2011nonlocality, riedel2021bell}. They are characterized by the fact that they are diagonal in the Bell basis, i.e., the basis obtained by certain local unitary rotations of maximally entangled bipartite states. The class contains both the maximally mixed state, as well as maximally entangled states. Moreover, it contains bound entangled states, which are states that exhibit entanglement, but cannot be distilled to pure Bell pairs \cite{horodecki1998mixed}. As such, the class of these states is a popular playground to gain insight in the structure of states and entanglement in high dimensional systems. What is more, these states are relevant in the context of quantum key distribution, as in certain protocols the shared states between the two parties can be assumed to be Bell diagonal, thereby simplifying the analysis of security \cite{grasselli2021quantum}.

There exist several generalizations of the class of Bell-diagonal states, most of which extend the usual Bell-diagonal states to the case of multi-qubit systems \cite{yu2005evolution,batle2010nonlocality,rafsanjani2012genuinely,batle2016nonlocality}.
In \cite{uchida2015entangled}, the authors extend the notion to the multi-qu$d$it scenario.
All of these generalizations have in common that they range only over composite systems of equal local dimension.

However, there exist several physical systems where this assumption is not met, e.g., for photons interacting with ions \cite{bock2018high}, or systems using photonic polarization and time of arrival degrees of freedom \cite{pilnyak2020encoding}. 
This indicates that it is useful to extend the definition of Bell-diagonal states to the case where the local Hilbert space dimensions do not coincide. In this paper, we propose such an extension in the bipartite case and analyze their entanglement features.

Entanglement is an important resource for imminent applications of quantum technology, as its presence guarantees genuine quantum correlations between several systems. However, its efficient and reliable detection remains a hard challenge, particularly for quantum systems of large dimensions, as the decision problem of whether a given state is entangled or not, is known to be NP hard \cite{gurvits2003classical, gharibian2008strong}. This remains true for the subclass of Bell-diagonal states, despite recent progress \cite{popp2022almost}.

Nevertheless, many different criteria are known which are sufficient for the presence of entanglement. An easily applicable criterion for entanglement arises from so-called
\emph{entanglement witnesses}, which are observables on the composite Hilbert space that yield non-negative
expectation values for all separable states, and strictly negative expectation values for some entangled
states \cite{guhne2009entanglement}. The existence of such linear operators is guaranteed by the Hahn-Banach theorem as the set of
separable density operators is a convex subset of all density operators.
Naturally, the following questions arise: \emph{Given an entangled density operator, how do we construct
an entanglement witness that is capable of detecting its entanglement? And out of these witnesses, which one
is suited best?} Generally, these two questions are difficult to answer. However, in this paper, we restrict
our attention to entangled states that can be detected using a powerful criterion derived by 
Sarbicki, Scala and Chru{\'s}ci{\'n}ski \cite{class-of-ews}. Their criterion allows for the construction of entanglement witnesses, 
 but assumes Hermitian operator bases. We will refer to said criterion and witnesses as \emph{SSC bound} and \emph{SSC witnesses}, respectively. 
As the generalized Bell-diagonal states are naturally expressed in terms of a non-Hermitian operator basis, we generalize the SSC construction to allow for arbitrary orthonormal operator bases of the local Hilbert spaces.
Finally, we show how to
optimize over the class of arising SSC witnesses.
While our results on the witnesses are universally applicable, we focus on their use in the case of generalized Bell diagonal states.

This paper is organized as follows. In Section~\ref{sec:bdsunequal}, we introduce the family of generalized BDS. As these states are naturally defined in a non-Hermitian operator basis, we generalize in Section~\ref{sec:sscwitnessfornon} the entanglement criterion of Ref.~\cite{class-of-ews} to the case of non-Hermitian bases. We then reduce the optimization process required to derive entanglement witnesses to a singular value decomposition and isolate optimal choices of parameters of the criterion by relating it to the white noise robustness of entanglement detection. In Section~\ref{sec:sscwitnessbds} we explicitly investigate the criterion for the introduced class of generalized BDS and show that there exist detectable Bell diagonal states of unequal dimension which are bound entangled, but are neither detected by the usual computable cross norm or realignment (CCNR) criterion \cite{chen2002matrix, rudolph2005further}, nor the de~Vicente criterion \cite{vicente2007separability, de2008further}.

\section{Bell diagonal states of unequal local dimensions}\label{sec:bdsunequal}
Let $\mathcal{H} = \mathcal{H}_A \otimes \mathcal{H}_B$ be a
bipartite Hilbert space with local dimensions $d_A$ and $d_B$,
respectively. Without loss of generality, we assume
$d_A \le d_B$.
Throughout this paper, we denote by $S\in\{A,B\}$ the subsystem of interest.\par
We would like to construct generalized Bell mixtures acting on a
bipartite Hilbert space where generally, $d_A \neq d_B$. To do so, we fix an orthonormal
basis $\{\ket{i}_S\}_{i=0}^{d_S-1}$ on each local space $\mathcal{H}_S$.
This allows us to define the unitary shift operator 
\begin{equation}
X_S : \mathcal{H}_S \to \mathcal{H}_S, \ket{i}_S \mapsto \ket{i\oplus 1}_S,
\end{equation}
where $\oplus$ denotes addition modulo $d_S$, as
well as the unitary clock operator 
\begin{equation}
Z_S : \mathcal{H}_S \to \mathcal{H}_S, \ket{i}_S \mapsto \omega_S^{i}\ket{i}_S,
\end{equation}
where
$\omega_S = \exp(2\pi \mathrm{i} / d_S)$ is the corresponding
root of unity for subsystem $S$. Both of the above definitions are extended to all of $\mathcal{H}_S$ via linearity. Note that for each subsystem $S$,
the operators $\{X_S^\alpha Z_S^\beta\}_{\alpha, \beta = 0}^{d_S - 1}$
form the so-called \emph{Heisenberg-Weyl basis} of linear operators acting on $S$ \cite{zyczkowski2006geometry}. In particular, they obey
\begin{equation}\label{eq:hw-orthogonality}
    \trace{\left(\left(X_S^\alpha Z_S^\beta\right)^\dagger X_S^\mu Z_S^\nu \right)}
    = d_S\delta_{\alpha\mu}\delta_{\beta\nu},
\end{equation}
i.e., they are pairwise orthogonal w.r.t.~the Hilbert-Schmidt inner product. Note that this also implies that for $(\alpha,\beta)\neq (0,0)$, the operator $X_S^\alpha Z_S^\beta$ is traceless. Here and in the following, the exponents of the clock and shift operators may be interpreted modulo $d_S$ as $X_S^{d_S} = Z_S^{d_S} = \one_S$. \par
We now construct a class of quantum states that exhibit particularly nice symmetry properties. To do so, we define the generalized Bell state
\begin{equation}\label{eq:bell00}
    \ket{\phi^{00}} := \frac{1}{\sqrt{d_A}}\sum_{i=0}^{d_A - 1}
    \ket{i}_A\otimes \ket{i}_B,
\end{equation}
and extend it to a basis of the bipartite Hilbert space via
\begin{equation}
    \ket{\phi^{\alpha\beta}} :=
    \left(Z^\alpha_A \otimes X^\beta_B \right) \ket{\phi^{00}}.
\end{equation}
In the following, we will omit the index denoting the subsystem, as the position of the operators and vectors leaves no ambiguity. One can verify that the $\ket{\phi^{\alpha\beta}}$ are all pairwise orthogonal. 
For a given probability distribution $\{p_{\alpha\beta}\}$, i.e., $p_{\alpha\beta}\geq 0$ and $\sum_{\alpha,\beta} p_{\alpha\beta} = 1$, we denote by $P$ the $d_A\times d_B$ matrix with entries $p_{\alpha\beta}$, i.e.,
\begin{align}
P = \begin{pmatrix}p_{00} & \ldots & p_{0,d_B-1} \\
\vdots & & \vdots\\
p_{d_A-1,0} & \ldots & p_{d_A-1,d_B-1}
\end{pmatrix},
\end{align}
and define
\begin{equation}\label{eq:GBM}
    \rho_P := \sum_{\alpha=0}^{d_A-1}\sum_{\beta=0}^{d_B-1}
    p_{\alpha\beta}
    \ketbra{\phi^{\alpha\beta}}{\phi^{\alpha\beta}}.
\end{equation}
These density operators are diagonal in the generalized Bell basis $\ket{\phi^{\alpha\beta}}$, hence, we will call them \emph{Bell diagonal states}, or \emph{BDS} for short.

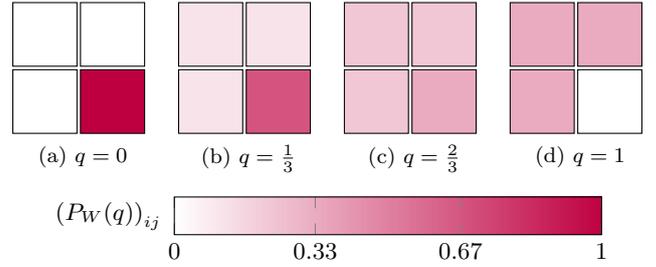
\begin{figure}
    \centering
    \begin{subfloat}[][$q = 0$]
         \wernerzero
    \end{subfloat}
    \begin{subfloat}[][$q = \frac13$]
         \wernerone
    \end{subfloat}
    \begin{subfloat}[][$q = \frac23$]
         \wernertwo
    \end{subfloat}
    \begin{subfloat}[][$q = 1$]
         \wernerthree
    \end{subfloat}

\begin{tikzpicture}
\begin{axis}[
    hide axis,
    scale only axis,
    height=0pt,
    width=0pt,
    colormap={slategraywhite}{rgb255=(255,255,255) rgb255=(191,0,64)},
    colorbar horizontal,
    point meta min=0,
    point meta max=1,
    colorbar style={
        width=0.65\columnwidth,
        xtick={0,0.33,0.67,1},
        ylabel=\rotatebox{-90}{$\left(P_W(q)\right)_{ij}\,$}
    }]
    \addplot [draw=none] coordinates {(0,0)};
\end{axis}
\end{tikzpicture}
    
    \caption{Graphical representations of the entries of the probability matrices $P_W(q)$ of the Bell diagonal Werner states from Eqs.~\eqref{eq:werner} and \eqref{eq:werner-matrix} for (a) $\rho_W(0)$, (b) $\rho_W(\frac13)$, (c) $\rho_W(\frac23)$ and (d) $\rho_W(1)$.
    }
    \label{fig:werner}
\end{figure}

A simple example for Bell diagonal states in $2\times 2$ dimensions is the family of Werner states \cite{werner1989quantum}, given by
\begin{align} 
    \rho_W(q) & = \frac q3\big(\ketbra{\phi^{00}}{\phi^{00}} + \ketbra{\phi^{01}}{\phi^{01}} + \ketbra{\phi^{10}}{\phi^{10}}\big) \nonumber\\
              &\pushright{+ (1-q)\ketbra{\phi^{11}}{\phi^{11}}},\label{eq:werner}
\end{align}
where $q\in[0,1]$. The probability matrix of such a state $\rho_W (q)$ thus reads
\begin{equation}
    P_W(q) = 
    \begin{pmatrix}
        \frac{q}{3} & \frac{q}{3}\\
        \frac{q}{3} & 1-q
    \end{pmatrix}.\label{eq:werner-matrix}
\end{equation}
In Fig.~\ref{fig:werner}, we show a graphical grid representation of the probability matrices of the Werner states $\rho_W(0)$, $\rho_W(\frac13)$, $\rho_W(\frac23)$ and $\rho_W(1)$.

Using Eq.~\eqref{eq:hw-orthogonality}, a BDS can be decomposed in the Heisenberg-Weyl basis as
\begin{equation}\label{eq:bds-hw-uneq-full}
    \rho_P = \frac{1}{d_A d_B}
    \sum_{\kappa,\mu=0}^{d_A-1}
    \sum_{\nu=0}^{d_B-1}s_{\kappa\nu} 
    \lambda_{\mu\nu} X^\mu Z^\kappa
    \otimes X^\mu Z^{-\nu},
\end{equation}
where the $s_{\kappa\nu}$ are given by
\begin{equation}
    s_{\kappa\nu} =
    \frac{1}{d_A} \sum_{j=0}^{d_A-1} \left(\omega_A^\kappa \omega_B^{-\nu}\right)^j
\end{equation}
and the $\lambda_{\mu\nu}$ are the Fourier transformed probabilities
\begin{equation}
    \lambda_{\mu\nu} =
    \sum_{\alpha=0}^{d_A-1}\sum_{\beta=0}^{d_B-1}
    p_{\alpha\beta} 
    \omega_A^{\alpha\cdot \mu}\omega_B^{\beta\cdot \nu}.
\end{equation}
We denote by $\Lambda$ the matrix with entries $\lambda_{\mu\nu}$. We will later connect the latter to properties of the state. 

Conversely, given the coefficients $\lambda_{\mu\nu}$, the probabilities $p_{\alpha\beta}$ can be obtained by means of the inverse Fourier transform, i.e.,
\begin{equation}
    p_{\alpha\beta} = \frac{1}{d_A d_B}
    \sum_{\mu=0}^{d_A-1}\sum_{\nu=0}^{d_B-1}
    \lambda_{\mu\nu}
    \omega_A^{-\alpha\cdot \mu} \omega_B^{-\beta\cdot \nu}.
\end{equation}
As an example, consider the state $\rho_P = \ketbra{\phi^{00}}{\phi^{00}}$, i.e., $p_{00}=1$. Its Fourier coefficients are given by $\lambda_{\mu\nu} = 1$ for all $\mu$ and $\nu$.
As it turns out, this Fourier picture yields insight into the entanglement properties of the BDS.

Whether a given Fourier matrix $\Lambda$ yields a proper probability distribution, i.e., $0\leq p_{\alpha\beta}\leq 1$ and $\sum_{\alpha\beta} p_{\alpha\beta} = 1$, is cumbersome to check without actually performing the transformation. Apart from the trivial conditions $\lambda_{00} = 1$ from normalization and $\lambda_{\mu\nu} = \lambda^*_{-\mu,-\nu}$ due to reality of the probabilities (note that indices are taken modulo $d_A$ and $d_B$, respectively, and $z^*$ denotes the complex conjugate of $z$), a set of necessary conditions is given by the Herglotz-Bochner theorem  \cite{rudin2017fourier, maruyama2017herglotz} that yields conditions on the Fourier coefficients for the probabilities to be positive. In particular, it implies that the $r \times r$-dimensional Toeplitz matrices $\mathcal{T}_{(f_A,f_B)}^{(r)}$ with entries
\begin{align}
    [\mathcal{T}_{(f_A,f_B)}^{(r)} ]_{\mu\nu} = \lambda_{(\mu-\nu)f_A,(\mu-\nu)f_B}
\end{align}
need to be positive semidefinite for all $0\leq f_A \leq d_A-1, 0\leq f_B \leq d_B-1$ and $r \geq 1$.  As an example, the Toeplitz matrices of size $2\times 2$ yield the conditions $\vert \lambda_{\mu\nu}\vert^2 \leq \vert \lambda_{00}\vert^2 = 1$ for all $\mu$ and $\nu$.\par

If we define a modified clock operator via its action on the computational basis vectors
\begin{equation}
    \Tilde{Z}: \mathcal{H}_A\to \mathcal{H}_A, \ket{i}_A\mapsto \omega_B^i \ket{i}_A
\end{equation}
(note that we multiply vectors from $\mathcal H _A$ by powers of $\omega_B$), as well as the operators
\begin{equation}
    T(\mu,\nu)=
    X^\mu \Tilde{Z}^\nu\otimes X^\mu Z^{-\nu}
\end{equation}
acting on the bipartite Hilbert space,
the BDS given by \eqref{eq:bds-hw-uneq-full} becomes diagonal:
\begin{equation}\label{eq:BD-HW-uneq}
    \rho_P = \frac{1}{d_A d_B}
    \sum_{\mu=0}^{d_A-1}\sum_{\nu=0}^{d_B-1}
    \lambda_{\mu\nu} T(\mu,\nu).
\end{equation}
Note however that for $d_A \neq d_B$, this is not a proper operator Schmidt decomposition of the state because $\{X^\mu \Tilde{Z}^\nu\}$ is an over-complete generating set of linear operators acting on $\mathcal{H}_A$.\par
For the case of equal dimensions $d_A = d_B =: d$, the operators $\Tilde{Z}$ and $Z$ are identical, and therefore, the state has the particularly nice form
\begin{equation}\label{eq:bds_bloch}
    \rho_P = \frac{1}{d^2}
    \sum_{\mu,\nu}\lambda_{\mu\nu}
    X^\mu Z^\nu \otimes X^\mu Z^{-\nu}.
\end{equation}
Thus, one obtains the
usual definition of Bell diagonal states (up to relabeling), as
found, e.g., in \cite{BDStatesSymmetry}.\par
For $d_A = d_B$, it is known that there exist non-entangling quantum channels that reduce arbitrary bipartite states to Bell diagonal ones \cite{popp2023special}. If one can detect entanglement in the resulting BDS, then the original state must already have been entangled. We will now show that a similar result also holds if $d_A < d_B$.
To that end, observe that pairwise tensor products of the Heisen\-berg\--Weyl operators together with the $T(\mu,\nu)$ and $\omega_A^\alpha \omega_B^\beta \one\otimes \one$ generate a finite group $G$ under composition (multiplication).
We denote by $\Ad_g (h) := ghg^{-1},\, g,h\in G$ the adjoint inner group automorphism, and extend it to linear combinations of group elements via linearity in $h$.
For every $q \in [0,1]$, we define a map on $\operatorname{End}(\mathcal{H})$ (i.e., the space of linear operators acting on $\mathcal{H}=\mathcal{H}_A\otimes\mathcal{H}_B$):
\begin{equation}\begin{gathered}\label{eq:channel-hw}
\Phi_q : \operatorname{End}(\mathcal{H})\to \operatorname{End}(\mathcal{H}),\\ \rho \mapsto (1-q)\rho + \frac{q}{d_A d_B}
    \sum_{\mu=0}^{d_A-1} \sum_{\nu=0}^{d_B-1}
    \Ad_{T(\mu,\nu)}(\rho).
\end{gathered}
\end{equation}
Note that these maps are trace preserving and positive (i.e., they map positive operators to positive operators). They can be used to convert quantum states into their Bell diagonal contribution:
\begin{proposition}
    For all linear operators $\rho$ on the bipartite Hilbert space and for all $q \in [0,1]$, it holds that
\begin{align}\label{eq:channel-bell}
    \Phi_q(\rho)
    =&(1-q)\rho + q
    \sum_{\alpha=0}^{d_A-1}\sum_{\beta=0}^{d_B-1}
    \braket{\phi^{\alpha\beta}|\rho|
    \phi^{\alpha\beta}}
    \ket{\phi^{\alpha\beta}}\bra{\phi^{\alpha\beta}}.
\end{align}
Restricting the domain of the maps $\Phi_q$ to density operators gives rise to a family of non-entangling quantum channels. For $q = 1$, this channel projects each state onto the set of Bell diagonal states.
\end{proposition}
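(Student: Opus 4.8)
The plan is to prove the operator identity first and then read off the channel properties from it. Since $\Phi_q=(1-q)\,\mathrm{id}+q\,\Lambda$ with $\Lambda(\rho)=\tfrac{1}{d_Ad_B}\sum_{\mu\nu}\Ad_{T(\mu,\nu)}(\rho)$, and the summand $(1-q)\rho$ already appears verbatim on the right-hand side of \eqref{eq:channel-bell}, it suffices to show $\Lambda=\mathcal{P}$, where $\mathcal{P}(\rho):=\sum_{\alpha\beta}\braket{\phi^{\alpha\beta}|\rho|\phi^{\alpha\beta}}\,\ketbra{\phi^{\alpha\beta}}{\phi^{\alpha\beta}}$ is the dephasing onto the Bell basis. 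Both $\Lambda$ and $\mathcal{P}$ are linear, so I would check equality on the dyadic basis $\{\ketbra{\phi^{\alpha\beta}}{\phi^{\gamma\delta}}\}$ of $\operatorname{End}(\mathcal{H})$; on such a dyad $\mathcal{P}$ returns $\delta_{\alpha\gamma}\delta_{\beta\delta}\ketbra{\phi^{\alpha\beta}}{\phi^{\alpha\beta}}$, so the whole statement reduces to the single claim $\tfrac{1}{d_Ad_B}\sum_{\mu\nu}\Ad_{T(\mu,\nu)}(\ketbra{\phi^{\alpha\beta}}{\phi^{\gamma\delta}})=\delta_{\alpha\gamma}\delta_{\beta\delta}\ketbra{\phi^{\alpha\beta}}{\phi^{\alpha\beta}}$.

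The core step is to understand the action of a single $T(\mu,\nu)$ on a Bell vector. I would compute $T(\mu,\nu)\ket{\phi^{\alpha\beta}}$ by writing $\ket{\phi^{\alpha\beta}}=(Z^\alpha\otimes X^\beta)\ket{\phi^{00}}$ and commuting the factors of $T(\mu,\nu)=X^\mu\tilde Z^\nu\otimes X^\mu Z^{-\nu}$ through $Z^\alpha\otimes X^\beta$ using the Heisenberg-Weyl relations \eqref{eq:hw-orthogonality} together with the defining action $\tilde Z:\ket i\mapsto\omega_B^i\ket i$. The natural target, consistent with the diagonal representation \eqref{eq:BD-HW-uneq}, is the eigenvalue relation $T(\mu,\nu)\ket{\phi^{\alpha\beta}}=\omega_A^{-\alpha\mu}\,\omega_B^{-\beta\nu}\ket{\phi^{\alpha\beta}}$. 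Granting it, $\Ad_{T(\mu,\nu)}$ multiplies the dyad $\ketbra{\phi^{\alpha\beta}}{\phi^{\gamma\delta}}$ by the phase $\omega_A^{-(\alpha-\gamma)\mu}\,\omega_B^{-(\beta-\delta)\nu}$, and the double sum factorizes into the character-orthogonality identities $\tfrac1{d_A}\sum_\mu\omega_A^{-(\alpha-\gamma)\mu}=\delta_{\alpha\gamma}$ and $\tfrac1{d_B}\sum_\nu\omega_B^{-(\beta-\delta)\nu}=\delta_{\beta\delta}$, collapsing the twirl exactly onto $\mathcal{P}$.

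I expect the eigenvector relation of the previous paragraph to be the main obstacle, and it is where the unequal dimensions genuinely bite: the shift blocks act as $X_A^\mu$ and $X_B^\mu$ with \emph{different} moduli $d_A$ and $d_B$, so the clean diagonalization available at $d_A=d_B$ (via $(U\otimes\bar U)\ket{\phi^{00}}=\ket{\phi^{00}}$) cannot be imported directly and must be re-established while tracking the modular wraparound of the $X$-shift on the longer space $\mathcal{H}_B$. The operator $\tilde Z$ is introduced precisely to carry the $\mathcal{H}_B$ root of unity $\omega_B$ onto $\mathcal{H}_A$ so that the two clock phases cancel across the tensor factors; verifying that this cancellation is exact for every $(\mu,\nu)$, and that the two character sums really do decouple into the independent $\delta_{\alpha\gamma}$ and $\delta_{\beta\delta}$ factors, is the delicate heart of the argument. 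I would sanity-check this step on the smallest asymmetric case $d_A=2,\,d_B=3$ before trusting the general computation.

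The remaining assertions follow cheaply once \eqref{eq:channel-bell} is in hand. Each $\Ad_{T(\mu,\nu)}$ is conjugation by a unitary, hence completely positive and trace preserving; $\Phi_q$ is a convex combination of such maps with the identity, so it is a legitimate quantum channel for every $q\in[0,1]$. For the non-entangling property I would note that every $T(\mu,\nu)=(X^\mu\tilde Z^\nu)\otimes(X^\mu Z^{-\nu})$ is a \emph{product} unitary, so $\Ad_{T(\mu,\nu)}$ maps separable states to separable states; consequently, for separable $\rho$ the output $\Phi_q(\rho)$ is a convex combination of separable states and hence separable, i.e.\ $\Phi_q$ creates no entanglement. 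Finally, at $q=1$ the identity reads $\Phi_1=\mathcal{P}$, whose range is spanned by the Bell projectors $\ketbra{\phi^{\alpha\beta}}{\phi^{\alpha\beta}}$ and which acts as the identity on states already of the form \eqref{eq:GBM}; thus $\Phi_1$ is idempotent and is exactly the projection onto the family of Bell diagonal states.
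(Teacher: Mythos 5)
Your overall architecture is sound and genuinely different from the paper's: you check the twirl on the dyad basis $\{\ketbra{\phi^{\alpha\beta}}{\phi^{\gamma\delta}}\}$ and reduce everything to an eigenvalue relation plus character orthogonality, whereas the paper expands $\rho$ in the Heisenberg--Weyl product basis and tracks the phases produced by the clock--shift commutation relations. The genuine gap is that your central lemma --- the relation $T(\mu,\nu)\ket{\phi^{\alpha\beta}}=\omega_A^{-\alpha\mu}\omega_B^{-\beta\nu}\ket{\phi^{\alpha\beta}}$ --- is false for $d_A<d_B$ whenever $\mu\neq 0$. The sanity check you yourself propose refutes it: for $d_A=2$, $d_B=3$ one has $T(1,0)=X_A\otimes X_B$ and $T(1,0)\ket{\phi^{00}}=\tfrac{1}{\sqrt2}\bigl(\ket{1}_A\ket{1}_B+\ket{0}_A\ket{2}_B\bigr)$, which has overlap $\tfrac12$ with $\ket{\phi^{00}}$ and so is not proportional to it. The culprit is exactly the wraparound you flagged: after re-indexing $j=i+\mu\bmod d_A$ in $\tfrac{1}{\sqrt{d_A}}\sum_i\ket{i+\mu\bmod d_A}_A\otimes\ket{i+\mu\bmod d_B}_B$, the $B$-index on the wrapped terms becomes $j+(d_B-d_A)$ rather than $j$, and no single power of $X_B$ repairs this for all $i$ simultaneously when $0<\mu<d_A$. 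Only the $\mu=0$ operators $T(0,\nu)=\Tilde{Z}^\nu\otimes Z^{-\nu}$ genuinely stabilize the Bell basis (with eigenvalue $\omega_B^{-\beta\nu}$), so the twirl dephases in $\beta$ but not in $\alpha$; the $\Tilde{Z}$ trick cancels the clock phases across the tensor factors but does nothing for the shift mismatch.

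Worse, the identity you are reducing to is itself violated, so the gap cannot be closed by a more careful computation of the same twirl: in the example above, $\tfrac16\sum_{\mu\nu}\Ad_{T(\mu,\nu)}(\ketbra{\phi^{00}}{\phi^{00}})=\tfrac12\ketbra{\phi^{00}}{\phi^{00}}+\tfrac12\ketbra{\chi}{\chi}$ with $\ket{\chi}=\tfrac{1}{\sqrt2}(\ket{1}_A\ket{1}_B+\ket{0}_A\ket{2}_B)$, which is not Bell diagonal (expand $\ket{\chi}=\tfrac12(\ket{\phi^{00}}-\ket{\phi^{10}}+\ket{\phi^{02}}+\ket{\phi^{12}})$ to see the surviving off-diagonal dyads), whereas the right-hand side of \eqref{eq:channel-bell} at $q=1$ gives $\ketbra{\phi^{00}}{\phi^{00}}$. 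The paper's own one-line proof via the Heisenberg--Weyl expansion runs into the same modular obstruction, so either the conjugating set must be changed (at the cost of the product structure on which the non-entangling argument --- yours and the paper's alike --- relies), or the statement must be restricted to $d_A=d_B$, where your eigenvector relation does hold and your argument goes through cleanly. Your closing paragraph (complete positivity, trace preservation, separability preservation via product unitaries, idempotency at $q=1$) is correct conditional on the identity and matches the paper's reasoning.
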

\begin{proof}
    To verify the equality, expand $\rho$ in the Heisenberg-Weyl basis and use the identities 
    \begin{align}
        Z_S^\nu X_S^\mu = \omega_S^{\mu\cdot \nu} X_S^\mu Z_S^\nu, &&
        \Tilde{Z}^\nu X_A^\mu = \omega_B^{\mu\cdot \nu} X_A^\mu \Tilde{Z}^\nu.
    \end{align}
    Positivity and trace preservation follow directly from the definition of the $\Phi_q$, hence, by restricting the domain to density operators, we can view them as quantum channels.
    In doing so, one sees that the $\Phi_q$ are non-entangling because they are convex combinations of local unitaries (see \eqref{eq:channel-hw}). By \eqref{eq:channel-bell}, it is clear that for $q = 1$, only the Bell diagonal part remains. 
\end{proof}
Hence, if we use this channel with $q=1$ for a quantum state, and detect entanglement in the resulting state, we know that the original state must have been entangled. Therefore, the channel can be used to characterize entanglement of arbitrary states.\par
To see the dephasing channel in action, consider the $2\times 3$ maximally entangled $|\phi^{00}\rangle$ and construct the one-parameter family of states
\begin{equation}\label{eq:phi-theta}
    |\phi_\theta\rangle = Z^\theta \otimes \one |\phi^{00}\rangle = |00\rangle + e^{\textrm{i}\theta\pi}|11\rangle ,
\end{equation}
where $0\le\theta \le 1$.
Applying the channel $\Phi_1$ yields
\begin{align}
    \Phi_1(|\phi_\theta\rangle\langle\phi_\theta|)
    = \cos ^2 (\theta\pi / 2)&|\phi^{00}\rangle\langle\phi^{00}|\\+ 
    \sin ^2 (\theta\pi / 2)&|\phi^{10}\rangle\langle\phi^{10}|,
\end{align}
which is also entangled whenever $\theta\neq 1/2$, as can be verified with the CCNR or the PPT criterion. It follows that for $\theta\neq 1/2$, the $\ket{\psi_\theta}$ must have been entangled.\par
For $d_A = d_B = d$, the Fourier picture allows to obtain a simple sufficient entanglement test based on the CCNR criterion \cite{chen2002matrix, rudolph2005further}: As the operators $\{X^\mu Z^\nu\}$ form an orthogonal operator basis, Eq.~(\ref{eq:bds_bloch}) constitutes a proper operator Schmidt decomposition, after one absorbs the phases of the $\lambda_{\mu\nu}$ into one of the bases. This means that $\rho_P$ is entangled if its CCNR value exceeds $d$ \cite{chen2002matrix, rudolph2005further}, i.e.,
\begin{align}\label{eq:ccnr}
    \operatorname{CCNR}(\rho_P):=\sum_{\mu,\nu=0}^{d-1} \vert \lambda_{\mu\nu} \vert > d.
\end{align}

\section{Constructing bound entangled Bell diagonal states}

A particularly interesting class of quantum states is that of bound entangled ones. These states cannot be used to distill highly entangled states by means of local operations and classical communication \cite{horodecki1998mixed}. Hence, these states are usually weakly entangled and their entanglement is notoriously difficult to detect. To benchmark the employed entanglement criteria,  we will construct examples of Bell diagonal bound entangled states. To that end, we make use of the fact that entangled states that are not detected by the PPT criterion are bound entangled \cite{horodecki1998mixed}. Remember that the PPT criterion states that for separable states $\sigma$, $\sigma^{T_A} := (T \otimes \one)(\sigma) \geq 0$, where $T$ denotes the transposition map in the computational basis \cite{peres1996separability, horodecki1997separability}. 

In terms of Eq.~(\ref{eq:bds_bloch}), the partial transposition maps the operator on system $A$ from $X^\mu Z^{\nu}$ to $\omega^{-\mu\cdot \nu} X^{-\mu}Z^{\nu}$. The positivity of the resulting state is generally hard to check.

We start our constructions by considering the case where $d_A=d_B=d$, and return to the case of unequal local dimensions afterwards. To ensure that the constructed states are indeed entangled, we make use of the CCNR criterion discussed in \eqref{eq:ccnr}.
Constructing $\rho_P$ such that $\rho_P^{T_A} \geq 0$ is not trivial, which is why one usually uses tricks to ensure it \cite{bruss2000construction, smolin2001four, lockhart2018entanglement}. Here, we use two different methods to construct states with an inherently positive partial transpose. First, we demand that the partial transpose is a multiple of a projector, i.e., $(\rho_P^{T_A})^2 \propto \rho_P^{T_A}$, implying positivity. Second, we construct $\rho_P$ such that $\rho_P = \rho_P^{T_A}$, similarly to the approach in Ref.~\cite{bruss2000construction}.

\subsection{States with partial transpose being proportional to a projector}

Our first approach demands that $(\rho_P^{T_A})^2 \propto \rho_P^{T_A}$. To simplify things, we restrict ourselves to a specific subfamily of Bell diagonal states, which we call dichotomous Bell diagonal states. These are states where each Bell state is either non-present or comes with the same, fixed probability. 
Such states can be characterized uniquely by the set $\mathcal S=\{(\alpha,\beta)\,:\,p_{\alpha\beta} \neq 0\}$. 
In particular, it holds for the probabilities
\begin{align}
    p_{\alpha\beta} \in \left\{ 0,\frac1{\vert \mathcal S \vert} \right\}.
\end{align}

Writing these states as 
\begin{align}\label{eq:dbds}
    \rho_\mathcal S := \frac1{\vert \mathcal S\vert}\sum_{(\alpha,\beta) \in \mathcal S}\ketbra{\phi^{\alpha\beta}}{\phi^{\alpha\beta}}
\end{align}
and demanding $(\rho_\mathcal S^{T_A})^2 \propto \rho_\mathcal S^{T_A}$ yields after a lengthy but straightforward calculation 
\begin{align}\label{eq:projector}
    \sum_{\substack{(\alpha,\beta) \in \mathcal S\\(\mu,\nu) \in \bar{\mathcal S}}} \omega^{(\alpha,\beta) \wedge (\mu,\nu)} X^{\mu - \alpha} Z^{\nu - \beta} \otimes (X^{\mu - \alpha} Z^{\nu - \beta})^\dagger = 0,
\end{align}
where $\bar{\mathcal S}$ denotes the complement of $\mathcal S$ and $(\alpha,\beta) \wedge (\mu,\nu) = \alpha\nu - \beta \mu$ denotes the standard symplectic form. Note that this equation yields one condition for each fixed displacement $\Delta = (\mu,\nu) - (\alpha,\beta)$, with subtraction (and addition) defined component wise. Graphically, this condition can be read as follows. Draw a grid of size $d\times d$ and color the cells corresponding to points in $\mathcal S$. For each displacement $\Delta$, check all colored cells $(\alpha,\beta)$ on the grid, such that the displacement links a colored to a non-colored cell $(\mu,\nu) = (\alpha,\beta)+\Delta$ (note that all operations are again taken modulo $d$). For each such colored cell $(\alpha, \beta)$, calculate the phase $\omega^{(\alpha,\beta)\wedge \Delta}$. The condition in Eq.~(\ref{eq:projector}) now states that for all displacements $\Delta$, the sum of the phases of the admissible cells must vanish.

We visualize this procedure by the following example in $d=3$. We choose $\mathcal S=\{(0,0), (1,1), (1,2), (2,1), (2,2)\}$, which would correspond to the Bell-diagonal state $\rho_P$ given by probabilities
\begin{align}
    P = \frac15\begin{pmatrix}1 & 0 & 0 \\ 0 & 1 & 1 \\ 0 & 1 & 1\end{pmatrix}.
\end{align}
This can be equivalently visualized by the grid displayed in Fig.~\ref{fig:pattern3A}.

\begin{figure}
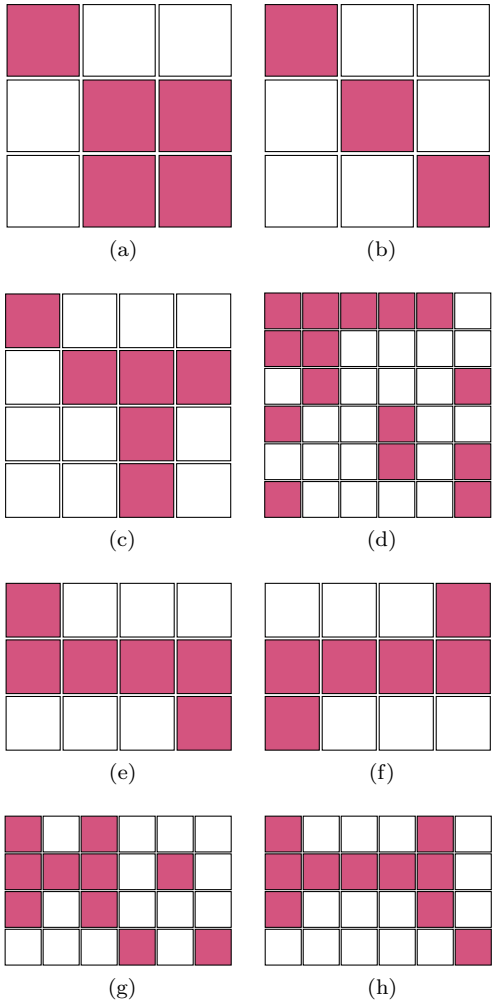

    \centering
    \begin{subfloat}[\label{fig:pattern3A}]
         \patternthree
     \end{subfloat}
     \begin{subfloat}[\label{fig:pattern3B}]
         \patternthreetwo
     \end{subfloat}
     \begin{subfloat}[\label{fig:pattern4}]
         \patternfour
     \end{subfloat}
     \begin{subfloat}[\label{fig:pattern6}]
         \patternsix
     \end{subfloat}
     \begin{subfloat}[\label{fig:pattern34A}]
         \patternthreefour
     \end{subfloat}
     \begin{subfloat}[\label{fig:pattern34B}]
         \patternthreefourtwo
     \end{subfloat}
     \begin{subfloat}[\label{fig:pattern46A}]
         \patternfoursix 
     \end{subfloat}  
     \begin{subfloat}[\label{fig:pattern46B}]
         \patternfoursixtwo
     \end{subfloat}
    \caption{Grid representation of some of the dichotomous BDS considered in the text. Displayed are (a) a $3\times 3$ dimensional BDS that does not fulfill the phase condition from Eq.~(\ref{eq:projector}) for guaranteeing PPT; (b) a state which does fulfill the phase condition, thus is PPT but not entangled; (c) the unique dichotomous $4\times 4$ dimensional Bell diagonal state which fulfills the phase condition and violates the CCNR criterion maximally; (d) a $6\times 6$ dimensional BDS which is $6$-displacement homogeneous (see Def.~\ref{def:homogeneity}) but not PPT; (e) and (f) the two bound entangled dichotomous $4\times 6$ dimensional BDS.}
    \label{fig:patterns}
\end{figure}

Now, consider the displacement $\Delta=(1,1)$. It links  $\mathcal S$ to $\bar{\mathcal S}$ only at places $(1,2)$ and $(2,1)$ with corresponding phases $\omega^{(1,2)\wedge (1,1)} = \omega^{-1}$ and $\omega^{(2,1)\wedge (1,1)}=\omega$. Thus, the corresponding state does not have the projector property. In contrast, consider the state defined by choosing $\mathcal S=\{(0,0), (1,1), (2,2)\}$, displayed in Fig.~\ref{fig:pattern3B}.
Here, choosing for example $\Delta = (0,1)$, links all three points in $\mathcal S$ to $\bar{\mathcal S}$, with sum of phases 
$\omega^{(0,0)\wedge(0,1)} + \omega^{(1,1)\wedge(0,1)} + \omega^{(2,2)\wedge(0,1)} = 1 + \omega + \omega^2 = 0$. The same applies to all other $\Delta$ apart from $\Delta = (1,1)$ and $\Delta = (2,2)$, which link no point at all from $\mathcal S$ to $\bar{\mathcal S}$, and are therefore trivially fulfilled. 

While this last example does have a projector-like partial transpose, it is not detected as entangled by the CCNR criterion: it reaches a CCNR value of $\operatorname{CCNR}(\rho_\mathcal S) = 3$. In fact, by exhaustively checking all dichotomous states with $d=3$, we can say that there are no dichotomous Bell diagonal states with entanglement detected by CCNR. However, in $d=4$, we can find (up to symmetries) exactly one solution that is detected. It is displayed in Fig.~\ref{fig:pattern4} with its probability matrix given by
\begin{align}\label{eq:peacock}
    P = \frac16\begin{pmatrix} 1 & 0 & 0 & 0 \\ 0 & 1 & 1 & 1 \\ 0 & 0 & 1 & 0 \\ 0 & 0 & 1 & 0\end{pmatrix}.
\end{align}
This state is very similar to the bound entangled state constructed in Ref.~\cite{benatti2004non}.\footnote{While we could not find local unitary transformations to transform the states into another, we could not find a local unitary invariant which differs.} It has a CCNR value of $6 > d=4$. Numerical search indicates that this is the largest violation any (even non-dichotomous) Bell diagonal state with positive partial transpose exposes in this dimension. 
Furthermore, it has additional nice properties: For each displacement $\Delta \neq (0,0)$, exactly four points in $\mathcal S$ are linked to points in $\bar{\mathcal S}$. This motivates us to search for such homogeneous solutions, which we capture in the following definition:

\begin{definition}\label{def:homogeneity}
A dichotomous Bell diagonal state $\rho_\mathcal S$ as defined in Eq.~(\ref{eq:dbds}) is called \emph{$k$-displacement homogeneous} if each displacement $\Delta \neq (0,0)$ links exactly $k$ of the points in $\mathcal S$ to points in $\bar{\mathcal S}$. 
\end{definition}

However, such states can only be found for certain parameters: Let us denote the dimension by $d$, the rank of the state by $\vert \mathcal S\vert$ and the displacement homogeneity by $k$.
As there are $d^2-1$ different non-vanishing displacement tuples, the total number of links between $\mathcal S$ and $\bar{\mathcal S}$ is given by $(d^2-1)k$.

On the other hand, there are $\vert \mathcal S \vert(d^2-\vert \mathcal S \vert)$ possible links between $\mathcal S$ and $\bar{\mathcal S}$, each contributing to exactly one of the shifts.
Setting both equal, we obtain the diophantine equation \cite{diophantus300arithmetica}
\begin{align}\label{eq:dio}
    (k-\vert \mathcal S \vert) d^2-k+\vert \mathcal S\vert^2 = 0,
\end{align}
and all we have to do is to find its integer solutions, which is, in general, a hard task. Checking numerically for solutions, with fixed $d$ we obtain
the solutions given in Table~\ref{tab:diosol}. A few comments are in order:

First, note that we do not list the following trivial solutions: For each solution $\mathcal S$, also $\bar{\mathcal S}$ is a valid solution. We list only the solution with the smaller number of elements in $\mathcal S$, as they exhibit a larger CCNR violation. Setting $k=0$, we can always find the maximally mixed state as a solution, where $\mathcal S$ is the full set. Setting $k=1$, we always find the pure Bell states as solutions with $\vert \mathcal S \vert = 1$. These can never exhibit a positive partial transpose and are therefore omitted as well.

\begin{table}[t]
    \centering
    \begin{tabular}{r|r|r|r|l}
        $d$ & $\vert \mathcal S \vert$ & $k$ & CCNR$-d$ & entangled dichotomous BDS exists?\\
        \hline
4& 6&4& 2 & yes\\ 
5& 9&6& 2.53197 & no\\ 
6& 15&9& 2 & yes, but not PPT\\ 
7& 16&11& 3.94987\\ 
8& 28&16& 2\\ 
9& 16&13& 10.0278\\ 
10& 45&25& 2\\ 
11& 16&14& 18.0624\\ 
11& 25&20& 11.4663\\ 
11& 40&27& 5.58846\\ 
12& 66&36& 2
    \end{tabular}
    \caption{All non-trivial integer solutions to the diophantine Eq.~(\ref{eq:dio}) for $d\leq 12$, where $d$ denotes the local dimension, $\vert \mathcal S\vert$ denotes the rank and $k$ the displacement homogeneity, i.e., the number of links from $\mathcal S$ to $\bar{\mathcal S}$ for each displacement $\Delta$. The CCNR violation is calculated via Lemma~\ref{lem:ccnr}.  While these solutions in principle allow for the existence of dichotomous states similar to the one in Eq.~(\ref{eq:peacock}), we did not find any bound entangled solution for $d>4$, and could exclude their existence for $d=5$ and $d=6$.}
    \label{tab:diosol}
\end{table}

The remaining solution pairs are displayed in the table.

Second, we can construct a family of solutions for all even $d$ by setting $\vert \mathcal S\vert = \binom{d}{2}$ and $k=\frac{d^2}{4}$. These are those solutions yielding an excess CCNR value of $2$.

Note that we also list the CCNR value, which is fixed for this kind of states:
\begin{lemma}\label{lem:ccnr}
A $k$-displacement homogeneous dichotomous Bell diagonal state $\rho_\mathcal S$ defined by the set $\mathcal S \subset \{(\alpha,\beta)\,:\,0\leq \alpha,\beta < d\}$ as given in Eq.~(\ref{eq:dbds})  has a CCNR value of $\operatorname{CCNR}(\rho_\mathcal S) = 1+ (d^2-1)\frac{\sqrt{k}}{\vert \mathcal S\vert}$.
\end{lemma}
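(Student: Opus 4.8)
The plan is to reduce everything to the Fourier coefficients $\lambda_{\mu\nu}$ of the state, since by the discussion around Eq.~(\ref{eq:ccnr}) the CCNR value of a Bell diagonal state in equal dimensions is $\operatorname{CCNR}(\rho_\mathcal{S}) = \sum_{\mu,\nu=0}^{d-1}|\lambda_{\mu\nu}|$. The key claim, which I would isolate first, is that $k$-displacement homogeneity forces every nonzero-frequency coefficient to have the same modulus, namely $|\lambda_{\mu\nu}| = \sqrt{k}/|\mathcal{S}|$ for all $(\mu,\nu)\neq(0,0)$, while $\lambda_{00}=1$ by normalization. Granting this, summing over the $d^2-1$ nonzero frequencies and adding the $(0,0)$ term immediately yields the stated value $1 + (d^2-1)\sqrt{k}/|\mathcal{S}|$.

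To prove the claim, I would introduce the indicator $\chi$ of $\mathcal{S}$, so that $\lambda_{\mu\nu} = |\mathcal{S}|^{-1}\sum_{\alpha,\beta}\chi(\alpha,\beta)\,\omega^{\alpha\mu+\beta\nu}$ is, up to the normalization $1/|\mathcal{S}|$, the two-dimensional discrete Fourier transform of $\chi$. Expanding $|\lambda_{\mu\nu}|^2 = \lambda_{\mu\nu}\overline{\lambda_{\mu\nu}}$ and collecting terms by the displacement $\Delta = (\alpha-\alpha',\beta-\beta')$, one obtains the Wiener–Khinchin-type identity $|\lambda_{\mu\nu}|^2 = |\mathcal{S}|^{-2}\sum_{\Delta} A(\Delta)\,\omega^{\Delta\cdot(\mu,\nu)}$, where $A(\Delta) = |\{(\alpha,\beta)\in\mathcal{S} : (\alpha,\beta)+\Delta\in\mathcal{S}\}|$ is the cyclic autocorrelation of $\mathcal{S}$ and $\Delta\cdot(\mu,\nu)$ denotes the obvious pairing in the exponent.

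Now I connect $A$ to the homogeneity hypothesis: since each $(\alpha,\beta)\in\mathcal{S}$ is shifted by $\Delta$ either back into $\mathcal{S}$ or into $\bar{\mathcal{S}}$, we have $A(\Delta) = |\mathcal{S}| - L(\Delta)$, where $L(\Delta)$ is exactly the number of links from $\mathcal{S}$ to $\bar{\mathcal{S}}$ of Def.~\ref{def:homogeneity}. Hence the hypothesis gives $A(\Delta) = |\mathcal{S}| - k$ for all $\Delta\neq(0,0)$, together with the trivial value $A(0,0)=|\mathcal{S}|$. Substituting this into the Fourier sum and using the orthogonality relation $\sum_{\Delta}\omega^{\Delta\cdot(\mu,\nu)} = d^2\delta_{\mu 0}\delta_{\nu 0}$ — so that $\sum_{\Delta\neq 0}\omega^{\Delta\cdot(\mu,\nu)} = -1$ whenever $(\mu,\nu)\neq(0,0)$ — collapses everything to $|\lambda_{\mu\nu}|^2 = (|\mathcal{S}|-(|\mathcal{S}|-k))/|\mathcal{S}|^2 = k/|\mathcal{S}|^2$, which is the claim. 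The one thing to keep an eye on is the bookkeeping in the displacement indexing, since the links are counted with a fixed orientation and one should verify that $L(\Delta)$ and $A(\Delta)$ pair up with the correct sign of $\Delta$; this is purely a matter of consistent conventions, and I expect it to line up automatically. The main conceptual step — and the only genuine obstacle — is recognizing $|\lambda_{\mu\nu}|^2$ as the Fourier transform of the autocorrelation and identifying that autocorrelation with the link-count that underlies homogeneity; once that identification is made, the remaining computation is routine and, reassuringly, the normalization $\lambda_{00}=1$ it forces is consistent with the diophantine Eq.~(\ref{eq:dio}).
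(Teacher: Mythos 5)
Your proposal is correct and follows essentially the same route as the paper: both reduce the CCNR value to $\sum_{\mu,\nu}|\lambda_{\mu\nu}|$, expand $|\lambda_{\mu\nu}|^2$ as a double sum over $\mathcal{S}$, group terms by the displacement $\Delta$, use the homogeneity hypothesis to replace the $\Delta\neq(0,0)$ counts by $|\mathcal{S}|-k$, and finish with the orthogonality relation $\sum_\Delta\omega^{\Delta\cdot(\mu,\nu)}=d^2\delta_{\mu 0}\delta_{\nu 0}$. Your explicit packaging via the autocorrelation $A(\Delta)$ is just a cleaner name for the same bookkeeping the paper does implicitly.
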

\begin{proof}
Consider the Bloch component
\begin{align}
    \lambda_{\mu\nu} &= \sum_{\alpha\beta}p_{\alpha\beta} \omega^{\alpha\cdot\mu + \beta\cdot \nu} \\
    &= \frac{1}{\vert \mathcal S\vert}\sum_{(\alpha,\beta)\in \mathcal S} \omega^{(\alpha,\beta)\cdot(\mu,\nu)},
\end{align}
where $(\alpha,\beta)\cdot(\mu,\nu) := \alpha\cdot\mu + \beta\cdot\nu$. The squared magnitude of $\lambda_{\mu\nu}$ is given by
\begin{align}\label{eq:lambdasquare}
    \vert \lambda_{\mu\nu} \vert^2 &= \frac{1}{\vert \mathcal S\vert^2}\sum_{(\alpha,\beta)\in \mathcal S}\sum_{(\alpha^\prime,\beta^\prime)\in \mathcal S} \omega^{[(\alpha,\beta)-(\alpha^\prime,\beta^\prime)]\cdot(\mu,\nu)}.
\end{align}
Now, the $k$-displacement homogeneity of the links from $\mathcal S$ to $\bar{\mathcal S}$ implies also a displacement homogeneity of the links from $\mathcal S$ to itself: as the the total number of links for fixed $\Delta$ from $\mathcal S$ to anything is given by $\vert \mathcal S\vert$, and $k$ of them link to $\bar{\mathcal S}$, the other $\vert \mathcal S\vert - k$ must link to $\mathcal S$ again. The single exception is $\Delta=(0,0)$, as this shift always leads to $\mathcal S$ again. Thus, we can replace the double sum in Eq.~(\ref{eq:lambdasquare}) by a single sum and a factor:
\begin{align}
    \vert \lambda_{\mu\nu} \vert^2 &= \frac1{\vert \mathcal S\vert^2} \left[\vert \mathcal S \vert + (\vert \mathcal S \vert - k)\sum_{\Delta\neq(0,0)} \omega^{\Delta \cdot (\mu,\nu)}\right]\\
    &= \frac1{\vert \mathcal S\vert^2} \left[\vert \mathcal S \vert + (\vert \mathcal S \vert - k)(d^2\delta_{\mu,0}\delta_{\nu,0}-1) \right].
\end{align}
Thus, for each $(\mu,\nu) \neq (0,0)$, the squared magnitude is the same. Thus, the sum of them yields
\begin{align}
    \operatorname{CCNR}(\rho_\mathcal S) = \sum_{\mu,\nu} \vert \lambda_{\mu\nu} \vert = 1 + (d^2-1)\frac{\sqrt{k}}{\vert \mathcal S\vert}.
\end{align}
\end{proof}

Note that integer solutions to the diophantine equation does not guarantee existence of a corresponding grid. For example, for $d=5$, we checked exhaustively all grids with $\vert \mathcal S\vert=9$, and no solution exists. For $d=6$, however, solutions do exist and we display one example in Fig.~\ref{fig:pattern6}.

Unfortunately, though, none of them is arranged in such a way that the $k=9$ phases cancel for all $\Delta$. This does not guarantee a positive partial transpose, and indeed, all of the solutions (which we were barely able to check) are not PPT. Therefore, we conjecture that no PPT entangled $k$-displacement homogeneous dichotomous Bell diagonal states exist for $d_A=d_B>4$. 

Finally, let us turn to the case of unequal dimensions $d_A < d_B$. While these states are by construction diagonal in the generalized Bell basis, their Bloch representation fails to yield a direct operator Schmidt decomposition, making the evaluation of criteria like CCNR more cumbersome. Nevertheless, we can numerically optimize over the set of dichotomous Bell diagonal states, demanding that they are PPT, but maximizing their violation w.r.t.~CCNR. 
For local dimensions $d_A=3$ and $d_B=4$, we obtain -- up to cyclic permutations of the rows and columns -- two different states, depicted in figures~\ref{fig:pattern34A} and~\ref{fig:pattern34B}.

For $d_A=4$ and $d_B=6$, we obtain two different states (again up to cyclic permutations of the rows and columns). The first one is displayed in Fig.~\ref{fig:pattern46A} and given by the probability matrix
\begin{equation}\label{eq:bds46a}
    P^{(1)} = 
    \frac{1}{10}
    \begin{pmatrix} 
    1 & 0 & 1 & 0 & 0 & 0\\
    1 & 1 & 1 & 0 & 1 & 0\\
    1 & 0 & 1 & 0 & 0 & 0\\
    0 & 0 & 0 & 1 & 0 & 1
    \end{pmatrix},
\end{equation}
and the other one is displayed in Fig.~\ref{fig:pattern46B} and defined by
\begin{equation}\label{eq:bds46b}
    P^{(2)} = 
    \frac{1}{10}
    \begin{pmatrix} 
    1 & 0 & 0 & 0 & 1 & 0\\
    1 & 1 & 1 & 1 & 1 & 0\\
    1 & 0 & 0 & 0 & 1 & 0\\
    0 & 0 & 0 & 0 & 0 & 1
    \end{pmatrix}.
\end{equation}
Both exhibit a CCNR value of
\begin{align}
    \operatorname{CCNR}(\rho_{P^{(1,2)}}) \approx \sqrt{4\cdot6}+0.554.
\end{align}

Exhaustively testing all dichotomous BDS of unequal local dimensions $d_A \leq 4$ and $d_B\leq 6$ yields no further bound entangled states that are detected by CCNR.
In summary, we have constructed examples of bound entangled Bell diagonal states in dimensions $3\times 4$, $4\times 4$ and $4\times 6$.

\subsection{PT-invariant Bell diagonal states}

We now try to simply demand $\rho = \rho^{T_A}$ for the case $d_A=d_B=d$. Using the Bloch representation of $\rho_P$ we obtain
\begin{align}
    \rho_P^{T_A} = \frac 1{d^2} \sum_{\mu,\nu} \lambda_{\mu\nu}\omega^{-\mu\cdot \nu} X^{-\mu} Z^\nu \otimes X^{\mu}Z^{-\nu}.
\end{align}
Comparing this with the original state yields $\lambda_{\mu\nu} = 0$ whenever $\mu \not\equiv -\mu~(\text{mod } d)$. Otherwise, $\lambda_{\mu\nu}\omega^{-\mu\cdot \nu}=\lambda_{\mu\nu}$. The first condition is met by $\mu = 0$ or, if $d$ is even, $\mu = \frac d2$. If $\mu=0$, the second condition is met automatically. If $\mu = \frac d2$, the second condition implies that $\nu \equiv 0~(\text{mod }2)$.

Note that if $d$ is odd, only the $d$ entries in the first row of the $\lambda$ matrix can be non-zero. However, since $\vert \lambda_{\mu\nu} \vert \leq 1$, such states can never violate the CCNR criterion and we are not able to detect their entanglement with this criterion. Thus, we concentrate on the case of even $d$. In this case, we have $d+\frac d2$ non-vanishing entries in the first and $\frac d2$-th row. However, numerical maximization of CCNR over this class of states for $d=4$ and $d=6$ does not yield any state violating the CCNR criterion, making it unlikely that such detectably bound entangled states exist.

Note that for $d_A=2$ and $d_B$ arbitrary, no entangled states with $\rho=\rho^{T_A}$ exist, see~\cite{kraus2000separability}. We did not further investigate the case $d_A\neq d_B$.

\section{SSC witnesses for non-Hermitian operator bases}\label{sec:sscwitnessfornon}

While the CCNR criterion is known to work well in the case of equal dimensions, it fails to detect many entangled states, especially if the local dimensions are not equal. Furthermore, its application usually requires a full state tomography, which is costly in terms of the numbers of required measurements. In practical setups, the tool of entanglement witnesses is often used instead \cite{guhne2009entanglement}. A witness $W$ is an observable such that
\begin{itemize}
    \item $\trace(W\sigma)\geq 0 \text{ for all separable states }\sigma,$
    \item $\trace(W\rho) < 0 \text{ for at least one entangled state }\rho$.
\end{itemize}

To remedy the shortcomings of the CCNR criterion, in Ref.~\cite{class-of-ews}, the authors provided a strong family of separability criteria and sketched the construction of entanglement witness arising from it. However, throughout their paper, they assumed Hermitian operator bases on the local Hilbert spaces. Since we are dealing with Bell diagonal states, which are canonically defined in the non-Hermitian Heisenberg-Weyl basis, we show how to get rid of the latter restriction. This will then allow us to apply a generalization of their results to our BDS.\par
The remaining part of this section is organized as follows: In A, we show how to construct said entanglement witnesses. Except for the slight generalization, the construction will mostly follow Ref.~\cite{class-of-ews}. In B, we show how to analytically optimize said witnesses. Finally, in C, we suggest a numerical optimization procedure that only relies on a previously fixed number of different local measurements, thus avoiding resource consuming full quantum state tomography in practice.
\subsection{General construction of the witnesses}
In the following, we denote by $\{B_i^S\}_{i=0}^{d_S^2-1}$
an operator basis on subsystem $S \in \{A,B\}$, subject only to the restrictions
\begin{equation}\label{eq:orthogonality}
\begin{aligned}
    B_0^S = \one_S, &&
    \trace{\left({B_i^S}^\dagger B_j^S\right)} = d_S \delta_{ij}.
\end{aligned}
\end{equation}
Note that this definition implies that the operators $B_i^S$ are
traceless for $i \neq 0$.
A basis of operators acting on the joint Hilbert space 
$\mathcal{H}$ can be constructed via the pairwise tensor
products of the local basis operators. We define the
\emph{correlation matrix} of a given density operator $\rho$
acting on $\mathcal{H}$ as the complex $d_A^2 \times d_B^2 $ matrix $C$
with the entries
\begin{equation}
    c_{ij} = \trace{\left(\left(B_i^A \otimes
    B_j^B\right)^\dagger \rho\right)}.
\end{equation}
It follows that the density operator $\rho$ can be written as
\begin{equation}
    \rho = \frac{1}{d_A d_B}
    \sum_{i=0}^{d_A^2-1}\sum_{j=0}^{d_B^2-1}
    c_{ij} B_i^A \otimes B_j^B,
\end{equation}
and $\trace{\rho} = 1$ implies that $c_{00}=1$.
We employ the notation
$\|X\|_1 = \trace{\sqrt{X^\dagger X}}$
for the trace norm.
The criterion from Ref.~\cite{class-of-ews} reads:
\begin{theorem}[SSC Bound \cite{class-of-ews}]\label{thm:ssc}
Let $\rho$ be a separable density operator with correlation 
matrix $C$. Then, it holds that
\begin{equation}\label{eq:thm}
    \tracenorm{D_x C D_y} \le \sqrt{d_A-1+x^2}\sqrt{d_B-1+y^2}
\end{equation}
for all $x,y\geq0$, where 
$D_x = \operatorname{diag}(x, 1, \hdots, 1)$ is a 
$d_A^2 \times d_A^2$ matrix, and likewise
$D_y = \operatorname{diag}(y, 1, \hdots, 1)$ is a
$d_B^2 \times d_B^2$ matrix.\footnote{Note that we use a different normalization compared to Ref.~\cite{class-of-ews}.}
\end{theorem}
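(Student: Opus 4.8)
The plan is to exploit the linearity of the correlation matrix in $\rho$ together with convexity of the trace norm, so as to reduce the statement to pure product states. First I would write a separable $\rho$ as a convex combination of pure product states, $\rho = \sum_k p_k \ketbra{a_k}{a_k}\otimes\ketbra{b_k}{b_k}$. Since each $c_{ij}$ is linear in $\rho$, the correlation matrix decomposes as $C = \sum_k p_k C_k$, where $C_k$ is the correlation matrix of the $k$-th pure product state. Applying the triangle inequality for the trace norm to $D_x C D_y = \sum_k p_k D_x C_k D_y$ then gives $\tracenorm{D_x C D_y}\le \sum_k p_k \tracenorm{D_x C_k D_y}$. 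As the prefactors sum to one, it suffices to establish the bound for a single pure product state.

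For $\rho = \ketbra a a \otimes \ketbra b b$ the correlation matrix factorizes: setting $u_i = \trace((B_i^A)^\dagger \ketbra a a)$ and $v_j = \trace((B_j^B)^\dagger \ketbra b b)$, the multiplicativity of the trace under tensor products gives $c_{ij} = u_i v_j$, i.e.\ $C = u v^{T}$ is rank one. Consequently $D_x C D_y = (D_x u)(D_y v)^{T}$ is again rank one, and the trace norm of a rank-one matrix equals the product of the Euclidean norms of its factors, so $\tracenorm{D_x C D_y} = \euclidnorm{D_x u}\,\euclidnorm{D_y v}$. I would take care here that the relevant factor is the transpose $v^{T}$ rather than the adjoint, but since complex conjugation leaves Euclidean norms invariant, the single nonzero singular value is unaffected.

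It remains to evaluate $\euclidnorm{D_x u}^2 = x^2|u_0|^2 + \sum_{i\ge 1}|u_i|^2$. The normalization $B_0^A = \one_A$ gives $u_0 = \trace(\ketbra a a) = 1$. For the remaining terms I would use the orthogonality relation in Eq.~(\ref{eq:orthogonality}): expanding $\ketbra a a = \frac{1}{d_A}\sum_i u_i B_i^A$ and computing the Hilbert-Schmidt norm yields $\sum_i |u_i|^2 = d_A \trace((\ketbra a a)^2) = d_A$, whence $\sum_{i\ge 1}|u_i|^2 = d_A - 1$. Therefore $\euclidnorm{D_x u} = \sqrt{d_A-1+x^2}$, and identically $\euclidnorm{D_y v} = \sqrt{d_B-1+y^2}$. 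Combining this with the convexity step closes the argument; note that equality in fact holds for every pure product state, which is precisely what pins down the weighting in $D_x$ and $D_y$.

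The calculation is essentially elementary, so I do not expect a serious obstacle; the only points requiring care are the rank-one trace-norm identity (with the transpose-versus-adjoint subtlety above) and the purity identity $\sum_i|u_i|^2 = d_A$. Crucially, nowhere does the argument invoke Hermiticity of the $B_i^S$ or reality of the $u_i, v_j$ — it relies only on the orthonormality condition of Eq.~(\ref{eq:orthogonality}) — so the proof carries over verbatim to the non-Hermitian bases that are the focus of this paper.
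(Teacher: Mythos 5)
Your proof is correct and follows the standard argument behind the SSC bound: convexity of the trace norm reduces the claim to pure product states, where the correlation matrix is rank one and the purity identity $\sum_i|u_i|^2=d_S$ pins down the singular value. The paper itself only cites Ref.~\cite{class-of-ews} and remarks that Hermiticity of the basis can be dropped by allowing complex correlation-matrix entries; your write-up substantiates exactly that remark, in the paper's normalization, with the transpose-versus-adjoint point handled correctly.
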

Note that for $x=y=0$, the de~Vicente criterion \cite{vicente2007separability, de2008further} and for $x=y=1$, the CCNR
criterion \cite{chen2002matrix, rudolph2005further} is recovered. For Bell diagonal states with $d_A = d_B$, the criterion is strongest when choosing $x=y$, and the dependence on this parameter cancels out.

It turns out that the requirement that the basis operators be
Hermitian may be dropped (simply adapt the original proof by allowing for complex valued entries in the correlation matrix). For convenience, we introduce the
functions
\begin{align}
    R(x,y) &:= \sqrt{d_A-1+x^2}\sqrt{d_B-1+y^2} \textup{ and}\\ \label{eq:violation}
    g(x,y) &:= R(x,y) - \tracenorm{D_x C D_y},
\end{align}
such that a state is detected to be entangled by \eqref{eq:thm} if and only if $g(x,y) < 0$ for some $x,y$.
As sketched in \cite{class-of-ews}, this bound allows
for the construction of entanglement witnesses due to the
variational characterization of the trace norm: Let $X$ be a
complex $m\times n$ matrix (we assume $m\le n$). Then, it holds that
\begin{equation}\label{eq:tr-var}
    \tracenorm{X} = \max_{U \in \mathcal{I}(m\times n)} 
    |\trace{\left(U^\dagger X \right)}|,
\end{equation}
where $\mathcal{I}(m\times n)$ denotes the set of isometries of appropriate dimensions, i.e.,~$m\times n$ matrices that satisfy
$U U^\dagger= \one_m$. A proof for the case $m=n$
can be found in \cite{wilde2013quantum}, and the generalization
is straightforward.\par The optimization above can be
carried out over the real part of $\trace{(U^\dagger X)}$ since the
corresponding complex phase can be absorbed into the isometry.
For the construction of the entanglement witnesses, let us fix
operator bases satisfying \eqref{eq:orthogonality}. Now, we 
combine Thm.~\ref{thm:ssc} with \eqref{eq:tr-var}, such that we obtain
for separable states:
\begin{align}
    0 &\le R(x,y) - \tracenorm{D_x C D_y}\\
      &=R(x,y) - \max_{U \in \mathcal{I}(m\times n)} \real \trace \left(D_y U^\dagger D_x C \right)\\
    &= R(x,y) + \min_{U \in \mathcal{I}(m\times n)} 
    \real\trace{\left( D_y U^\dagger D_x C \right)}.     \label{eq:witn-var}
\end{align}
In order to write the right-hand side as an expectation value, we define $\tilde{U} = D_x U D_y$ for an arbitrary isometry $U$ and make use of the normalization of $\rho$, i.e., $c_{00} = 1$, to bound Eq.~(\ref{eq:witn-var}) by
\begin{align}
     0 \le c_{00}R(x,y) + \frac{1}{2}\sum_{i,j}\left(
    c_{ij}\tilde{U}_{ij}^* + c_{ij}^* \tilde{U}_{ij}\right).
    \label{eq:ineq}
\end{align}
Let us consider a generic entanglement
witness $W$. As it is Hermitian, it can be decomposed as $W = \frac{\tilde W + \tilde W ^\dagger}{2}$ for some operator $\tilde W$. Expanding it in the operator basis yields
\begin{equation}\label{eq:witn-dec}
    W = \frac{1}{2}\sum_{i,j}\left(w_{ij} B_i^A\otimes
    B_j^B + 
    w_{ij}^* {B_i^A}^\dagger \otimes {B_j^B}^\dagger \right).
\end{equation}
Exploiting Hermiticity of $\rho$, we find
\begin{equation}\label{eq:witn-structure}
    \trace{\left(W \rho \right)} = \frac{1}{2}\sum_{i,j}
    \left(  c_{ij} w_{ij}^* + c_{ij}^* w_{ij} \right).
\end{equation}
Comparing this to \eqref{eq:ineq} yields the coefficients:
\begin{equation}
\begin{aligned}\label{eq:witn-coeff}
    w_{00}&=xy U_{00} + R(x,y), \textup{ and } w_{i0}=yU_{i0},\\ 
    w_{0j}&=xU_{0j}, w_{ij}=U_{ij} \textup{ for }i,j>0.
\end{aligned}
\end{equation}
Together with \eqref{eq:witn-dec}, this gives an entanglement witness: If $\trace(W\rho)<0$, then $\rho$ is entangled. Note
that for Hermitian operator bases, one recovers the result from
\cite{class-of-ews}.\footnote{Note that, apart from the different normalization, there seem to be additional erroneous factors of $\sqrt{d_A}$ and $\sqrt{d_B}$ in the definitions of $w^{0\beta}$ and $w^{\alpha0}$ in  Ref.~\cite{class-of-ews}.}
\subsection{Analytical optimization of the witnesses}
Given some entangled state $\rho$, we would now like to find the parameters $x,y$ and $U$ characterizing the witness such that its detection power for $\rho$ is optimal. Since the witnesses follow from the criterion in Thm.~\ref{thm:ssc}, which does not depend on the isometry $U$ (recall that $U$ only entered when we constructed the witnesses, which are generally weaker than the criterion itself), we first have to optimize over $x$ and $y$.
This raises the question what optimality means in this context.
The interpretation of the minimum of the function $g$ in Eq.~(\ref{eq:violation}) is unclear, as the norm of the resulting witness depends on $x$ and $y$.
Therefore, we suggest that
the optimal parameter values of $x$ and $y$
correspond to the maximal \emph{noise robustness}:
We choose the parameters such that the given state mixed
with as much white noise as possible is still detected
by the criterion. For $\varepsilon\in [0,1]$, we therefore define
\begin{equation}
    \rho^\varepsilon := (1-\varepsilon)\rho + \varepsilon
    \frac{\one_A\otimes\one_B}{d_Ad_B},
\end{equation}
and call the corresponding SSC violation $g_\varepsilon(x,y)$
(see \eqref{eq:violation}). Optimal values
$(x,y)_*$ for $(x, y)$ are the ones that yield the largest $\varepsilon$ with
\begin{equation}
    \min_{x,y} g_\varepsilon (x,y) < 0.
\end{equation}
There might be multiple such values, in fact the example we study later provides
a one-dimensional curve of optimal values.
We define the \emph{noise threshold} to be the largest such $\varepsilon$
for given $x,y$ and denote it by $\varepsilon_\textrm{max}(x,y)$, such that
\begin{equation}
    (x,y)_*
    \in\underset{(x,y)}{\operatorname{argmax}}\,\varepsilon_\textrm{max}(x,y).
\end{equation}
This two-dimensional optimization can be carried out numerically or analytically
(depending on the structure of $\rho$). Having found optimal values for
$x$ and $y$, all that remains is the optimization over the isometry $U$.
To tackle this, we notice that the trace norm of a generic $m\times n$ matrix $X$
can be expressed in terms of its singular value decomposition
$X = R\varSigma S^\dagger$, where $R$ and $S$ are unitary $m \times m$ and $n \times n$ matrices, respectively, and $\varSigma$ is a $m \times n$ dimensional matrix with real, non-negative entries on the diagonal. Then, assuming $m \le n$, we have
\begin{equation}
    \tracenorm{X} = \trace \begin{pmatrix}
    \varSigma \\ \mathbf{0}\end{pmatrix}
    = \trace \begin{pmatrix} R^\dagger X S\\ \mathbf{0} \end{pmatrix}
    = \trace \left(S \begin{pmatrix} R^\dagger \\ \mathbf{0} \end{pmatrix} X \right),
\end{equation}
where $\mathbf{0}$ denotes the matrix consisting of zeros and is
of dimension $(m-n)\times m$. Applying this to the modified
correlation matrix, i.e., imposing
$D_x C D_y = R \varSigma S^\dagger$, we see that the isometry
minimizing \eqref{eq:witn-var} is given by
\begin{equation}
    U = -(R, \mathbf{0}) S^\dagger,
\end{equation}
where $\mathbf{0}$ is a $d_A^2 \times (d_B^2 - d_A^2)$ matrix
consisting of zeroes. Note that together with \eqref{eq:witn-dec} and
\eqref{eq:witn-coeff}, this yields the optimal witness from the SSC
family of entanglement witnesses for a given quantum state; and the optimized witnesses are, in fact, equivalent to the criterion from Thm.~\ref{thm:ssc}.\par
\subsection{Numerical optimization of the witnesses}
Let us fix some detectable quantum state $\rho$ with corresponding 
correlation matrix $C$.
The results so far rely on the full knowledge of the underlying
density operator. However, we would like to detect entanglement
in a way that requires measuring fewer than $d_A^2 \times d_B^2$
local basis operators. To achieve this, we note that the trace
norm can also be characterized as follows:
\begin{equation}
    \tracenorm{X} = \max_{\|U\|_\infty \le 1} \real\trace
    \left(U^\dagger X \right),
\end{equation}
where $\|U\|_\infty$ is the operator norm and evaluates to the largest
singular value of $U$. We can now make an ansatz for a witness in analogy to
\eqref{eq:witn-dec} and \eqref{eq:witn-coeff}, and numerically
optimize over the $w_{ij}$ with the constraints that
\begin{enumerate}
    \item the corresponding matrix $U$ fulfills
    $\|U\|_\infty \le 1$, and
    \item only $\ell$ entries of $U$ are non-vanishing.
\end{enumerate}
After fixing values $(x,y)$, the resulting expression needs to be optimized over $\ell$ entries of $U$,
which is generally an $\ell$-dimensional complex optimization problem
with non-linear constraints. Finally, we require that the witness
must be Hermitian, and therefore, the optimization boils down to
an $\ell$-dimensional \emph{real} procedure. Let us denote by $\mathcal P _\ell$ the set of possible parameters $(x,y)$ that can detect $\rho$ with $\ell$ local measurements other than the one corresponding to $B_0^A\otimes B_0^B = \one_A\otimes\one_B$ (since this is just the normalization of $\rho$), according to the aforementioned construction of entanglement witnesses. By $\mathcal P $, we denote the full admissable parameter space, i.e., the set of parameter values $(x,y)$ that allow for detection of $\rho$ by directly using Thm.~\ref{thm:ssc}. Clearly, this defines a filtration
\begin{equation}
    \emptyset = \mathcal{P}_0 \subseteq \mathcal{P}_1\subseteq\dots\subseteq \mathcal{P}_{d_A^2 d_B^2 - 1} = \mathcal{P}
\end{equation}
of the full admissable parameter space.
For the numerical implementation, it is feasible to choose those entries in $U$ to be non-vanishing that correspond to the largest entries (by absolute value) in $D_xCD_y$. We denote by $\tilde{\mathcal{P}}_\ell$ the corresponding parameter spaces. It can easily be verified that this construction yields a new filtration
\begin{equation}
    \emptyset = \tilde{\mathcal{P}}_0 \subseteq \tilde{\mathcal{P}}_1\subseteq\dots\subseteq \tilde{\mathcal{P}}_{d_A^2 d_B^2 - 1} = \mathcal{P}.
\end{equation}
\begin{figure}
    \centering
    \includegraphics[width=.4\textwidth]{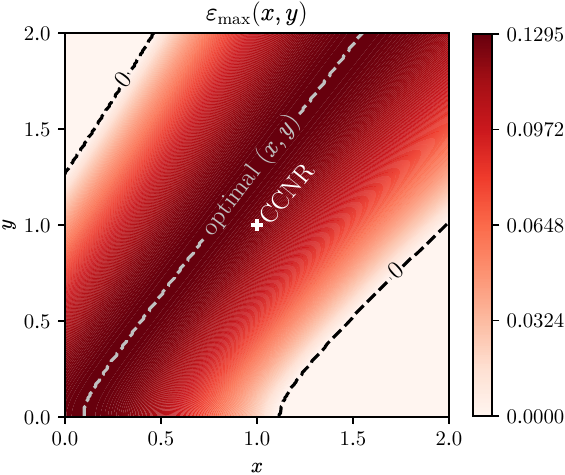}
    \caption{Noise threshold $\varepsilon_\textrm{max}$ for the state $\rho_1$ from Eq.~(\ref{eq:bds46a}) for different values $x$ and $y$ in the criterion in Thm.~\ref{thm:ssc}. Outside of the region enclosed by the dashed black lines, no entanglement is detected. The grey line indicates
    pairs $(x,y)$ that yield the maximal noise threshold $\operatorname{argmax}_{(x,y)}\varepsilon_\textrm{max}(x,y) \approx 0.1295$. The CCNR criterion at $(x,y)=(1,1)$ 
        (marked as a white cross) and the de~Vicente criterion at $(x,y)=(0,0)$ do not belong to this one dimensional curve. For further details, see main text.}
    \label{fig:noise}
\end{figure}
\section{SSC witnesses for Bell diagonal states of unequal local dimension}\label{sec:sscwitnessbds}

Having generalized the entanglement criterion and the witness construction to non-Hermitian bases, 
we are applying it to classes of states which are most naturally defined in such bases, namely Bell diagonal states.
If $d_A = d_B$, the
witness imposed in \eqref{eq:witn-structure} and \eqref{eq:witn-coeff} can be optimized analytically. Then, one finds that $x=y$ is required for optimality, such that the dependence on these parameters cancels out. In that case, one
obtains
\begin{equation}
    W=d\one\otimes\one - \sum_{\mu,\nu} \frac{\lambda_{\mu\nu}}{|\lambda_{\mu\nu}|}
    X^\mu Z^\nu \otimes X^\mu Z^{-\nu},
\end{equation}
which coincides exactly with the known CCNR witnesses
(see, e.g., \cite{gittsovich2010quantum}). However, in the case of $d_A \neq d_B$, there exist Bell diagonal states which are neither detectable by de~Vicente, nor by CCNR, but still are bound entangled and detected as such by different values of $x$ and $y$.\par
To that end, we set $d_A = 4$ and $d_B = 6$ and consider the bound entangled state given by the probability matrix $P^{(1)}$ in Eq.~(\ref{eq:bds46a}).
Let $\rho_1 := \rho_{P^{(1)}}$ denote the corresponding Bell mixture and
\begin{equation}
    \rho_1^\varepsilon = (1-\varepsilon) \rho_1 + 
    \varepsilon\frac{\one_4\otimes\one_6}{24}
\end{equation}
be its noisy version with $\varepsilon \in [0,1]$.
We plot the noise robustness for each choice of $x$ and $y$ in Fig.~\ref{fig:noise}. The state is detected by the SSC criterion for $\varepsilon \lesssim 0.1295$,
and with $\varepsilon = 0.129$, the criterion gives a value of $-6.3\times 10^{-4}$,
whereas the state is neither detected by the de~Vicente criterion nor by the
CCNR criterion. The matrix representation of the corresponding entanglement witness is given in the Appendix. In order to estimate the value of $\varepsilon$ where $\rho_1^\varepsilon$ becomes separable, we used the algorithm from Ref.~\cite{kampermann2012algorithm} and found separable decompositions for $\varepsilon \geq 0.361$.

We see that for $d_A \neq d_B$, the SSC criterion, and therefore
also our constructed witnesses, are stronger than the CCNR
and de~Vicente
criteria (which correspond to $x=y$ as pointed out in \cite{class-of-ews}).

\begin{figure}
    \centering
\includegraphics[width=0.87\columnwidth]{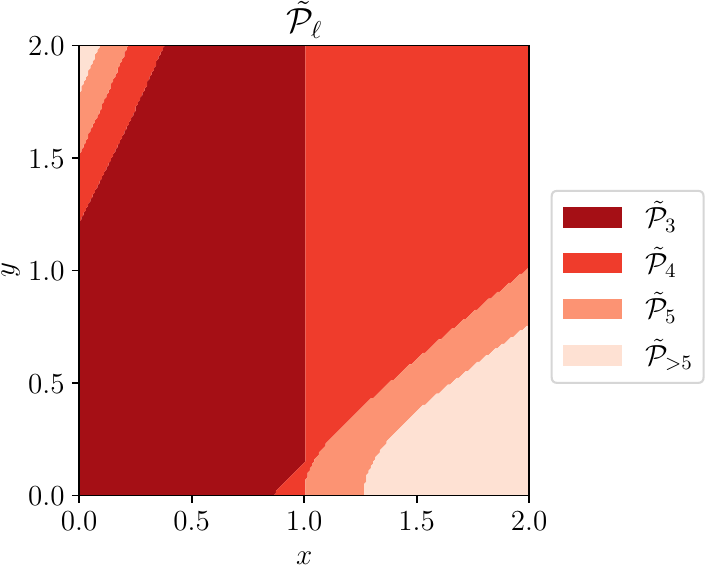}
    \caption{The maximally entangled Bell state $\ket{\phi^{00}}$ in local di\-men\-sions $2 \times 3$ admits suitable entanglement witnesses relying on $\ell\ge 3$ local measurements in the Heisenberg-Weyl basis. The areas $\tilde{\mathcal{P}}_\ell$ correspond to values of $(x,y)$ that require $\ell$ local measurements in order to detect the entanglement. In particular, tuples $(x,y)\in\tilde{\mathcal{P}}_{>5}= \mathcal{P}\setminus\tilde{\mathcal P}_5$ require at least five local measurements. The construction procedure is described in detail in Section~\ref{sec:sscwitnessfornon}.}
    \label{fig:numWitn}
\end{figure}

We now turn to the numerical optimization described in Section~\ref{sec:sscwitnessfornon} and apply the procedure to the state $\ket{\phi^{00}}$  in local dimensions $2\times 3$ (see definition in Eq.~\eqref{eq:bell00}). For suitable parameter values of $x$ and $y$, the numerical optimization yields entanglement witnesses detecting the state whenever we allow for $\ell\ge 3$ local measurements in the Heisenberg-Weyl basis (as well as the trivial one of $\one_A\otimes \one_B$). In Fig.~\ref{fig:numWitn}, we visualize the required number of measurements depending on $x$ and $y$.  The graph shows that we indeed have
\begin{equation}
    \emptyset=\tilde{\mathcal{P}}_0
    =\tilde{\mathcal{P}}_1
    =\tilde{\mathcal{P}}_2
    \subset\tilde{\mathcal{P}}_3\subset\dots\subset \tilde{\mathcal{P}}_5\subseteq\dots\subseteq\mathcal P.
\end{equation}
Furthermore, for the analyzed state $\ket{\psi^{00}}$ in local dimensions $2\times 3$, it holds that $[0,2]\times[0,2]\subset\mathcal P$. Therefore, all parameter values depicted in Fig.~\ref{fig:numWitn} admit witnesses that can detect the state whenever we allow for sufficiently many local measurements. A peculiar feature is the vertical line at $x=1$. It is caused by the fact that we select measurements corresponding to the $\ell$ largest entries of the correlation matrix after scaling it, i.e., of $D_x C D_y$. For the given state, the absolute values of the entries of this matrix are given by
\begin{align}
    (D_x C D_y)_\text{abs} = \frac12\begin{pmatrix}
    2xy & x        & x        & 0   & 0   & 0   & 0   & 0   & 0 \\
    0  & \sqrt{3} & \sqrt{3} & 0   & 0   & 0   & 0   & 0   & 0 \\
    0  & 0          & 0          & 1 & 1 & 1 & 1 & 1 & 1 \\
    0  & 0          & 0          & 1 & 1 & 1 & 1 & 1 & 1
    \end{pmatrix}.
\end{align}
While the value of $y$ only affects the trivial entry corresponding to the identity, the value of $x$ has an effect on which of the entries are largest, with a transition at $x=1$.

To conclude, we visualize the maximal noise robustness for the one-parameter family of dephased states
    $\rho_\theta\coloneqq \Phi_1(\ket{\phi_\theta}\bra{\phi_\theta})$,
introduced in \eqref{eq:phi-theta},
in figure~\ref{fig:noise-theta}. As expected, the noise robustness is maximal when $\rho_\theta$ is Bell-diagonal, i.e., $\theta\in\{0,1\}$. For $\theta=1/2$, no entanglement is detected. Using the PPT criterion, it can be verified that the state $\rho_{\theta=1/2}$ is indeed separable.
\begin{figure}
    \centering
    \includegraphics[width=\columnwidth]{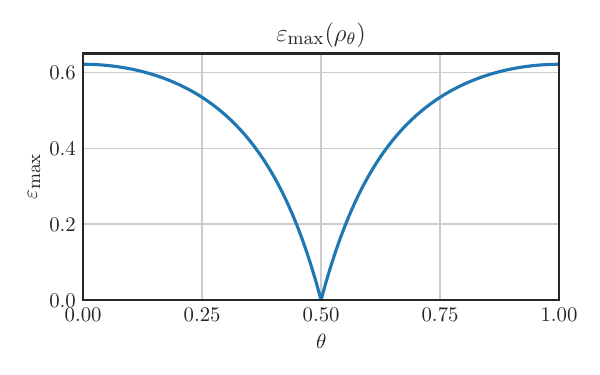}
    \caption{Maximal noise robustness of the one-parameter class of states $\{\rho_\theta\}$, see equation~\eqref{eq:phi-theta} and Section \ref{sec:sscwitnessbds} in the main text. For $\theta= 1/2$, no entanglement is detected.}
    \label{fig:noise-theta}
\end{figure}
\section{Conclusions}
We introduced the notion of Bell diagonal states (BDS) for the case of unequal local Hilbert space dimensions. We proceeded by analyzing their structure, in particular, their alternative characterization via the Fourier picture as well as their entanglement properties. For the latter, we made use of well-known criteria such as the CCNR test to construct bound entangled states with large violation of the respective criteria in different local dimensions.

We then generalized an entanglement criterion found by
Sarbicki, Scala and Chru{\'s}ci{\'n}ski \cite{class-of-ews} (SSC) based on the decomposition of density operators into Hermitian operator bases, to the case of
non-Hermitian operator bases, and demonstrated how to express the arising entanglement witnesses in these bases. This allowed for the application of said criterion to Bell diagonal states which are canonically defined in the non-Hermitian Heisenberg-Weyl basis. We imposed a protocol
which allows for analytical as well as numerical optimization of
entanglement witnesses based on the SSC criterion and showed that
the generated witnesses have the same detection power as the
criterion itself if the involved isometry parameter is chosen appropriately.  Additionally, we provided an optimization
method for (generally weaker) witnesses, which only relies on
a previously fixed number of local measurements, which enhances
applicability in experimental setups. Finally, by applying said criterion to BDS of unequal local dimensions, we showed that the witnesses arising from our construction are generally stronger than the CCNR criterion and allow to detect bound entanglement even when CCNR and the de~Vicente criterion fail to do so.

Nevertheless, many questions remain open for future research. For example, it is known that usual Bell diagonal states arise naturally in the analysis of quantum key distribution schemes, and it would be interesting to investigate the role of generalized Bell diagonal states in asymmetric protocols. Another immediate route forward concerns the entanglement dimensionality of the states as characterized by their Schmidt numbers. As the employed criteria for entanglement have been generalized to detect different Schmidt numbers, it should be possible to generalize the analysis in this paper to that case \cite{wyderka2023probing, liu2023characterizing}.

\section*{Acknowledgments}
\acknowledgments
We thank Otfried Gühne for fruitful discussions. 
The authors acknowledge support by the QuantERA project QuICHE via the German Ministry of Education and Research (BMBF Grant
No. 16KIS1119K). JM acknowledges financial support due to the WE-Heraeus communication program by the DPG.

\bibliographystyle{apsrev4-1}
\bibliography{sources}

\begin{thebibliography}{39}%
\makeatletter
\providecommand \@ifxundefined [1]{%
 \@ifx{#1\undefined}
}%
\providecommand \@ifnum [1]{%
 \ifnum #1\expandafter \@firstoftwo
 \else \expandafter \@secondoftwo
 \fi
}%
\providecommand \@ifx [1]{%
 \ifx #1\expandafter \@firstoftwo
 \else \expandafter \@secondoftwo
 \fi
}%
\providecommand \natexlab [1]{#1}%
\providecommand \enquote  [1]{``#1''}%
\providecommand \bibnamefont  [1]{#1}%
\providecommand \bibfnamefont [1]{#1}%
\providecommand \citenamefont [1]{#1}%
\providecommand \href@noop [0]{\@secondoftwo}%
\providecommand \href [0]{\begingroup \@sanitize@url \@href}%
\providecommand \@href[1]{\@@startlink{#1}\@@href}%
\providecommand \@@href[1]{\endgroup#1\@@endlink}%
\providecommand \@sanitize@url [0]{\catcode `\\12\catcode `\$12\catcode
  `\&12\catcode `\#12\catcode `\^12\catcode `\_12\catcode `\%12\relax}%
\providecommand \@@startlink[1]{}%
\providecommand \@@endlink[0]{}%
\providecommand \url  [0]{\begingroup\@sanitize@url \@url }%
\providecommand \@url [1]{\endgroup\@href {#1}{\urlprefix }}%
\providecommand \urlprefix  [0]{URL }%
\providecommand \Eprint [0]{\href }%
\providecommand \doibase [0]{http://dx.doi.org/}%
\providecommand \selectlanguage [0]{\@gobble}%
\providecommand \bibinfo  [0]{\@secondoftwo}%
\providecommand \bibfield  [0]{\@secondoftwo}%
\providecommand \translation [1]{[#1]}%
\providecommand \BibitemOpen [0]{}%
\providecommand \bibitemStop [0]{}%
\providecommand \bibitemNoStop [0]{.\EOS\space}%
\providecommand \EOS [0]{\spacefactor3000\relax}%
\providecommand \BibitemShut  [1]{\csname bibitem#1\endcsname}%
\let\auto@bib@innerbib\@empty
\bibitem [{\citenamefont {Batle}\ and\ \citenamefont
  {Casas}(2011)}]{batle2011nonlocality}%
  \BibitemOpen
  \bibfield  {author} {\bibinfo {author} {\bibfnamefont {J.}~\bibnamefont
  {Batle}}\ and\ \bibinfo {author} {\bibfnamefont {M.}~\bibnamefont {Casas}},\
  }\href@noop {} {\bibfield  {journal} {\bibinfo  {journal} {J. Phys. A: Math.
  Theor.}\ }\textbf {\bibinfo {volume} {44}},\ \bibinfo {pages} {445304}
  (\bibinfo {year} {2011})}\BibitemShut {NoStop}%
\bibitem [{\citenamefont {Riedel~G{\aa}rding}\ \emph
  {et~al.}(2021)\citenamefont {Riedel~G{\aa}rding}, \citenamefont {Schwaller},
  \citenamefont {Chan}, \citenamefont {Chang}, \citenamefont {Bosch},
  \citenamefont {Gessler}, \citenamefont {Laborde}, \citenamefont {Hernandez},
  \citenamefont {Si}, \citenamefont {Dupertuis} \emph
  {et~al.}}]{riedel2021bell}%
  \BibitemOpen
  \bibfield  {author} {\bibinfo {author} {\bibfnamefont {E.}~\bibnamefont
  {Riedel~G{\aa}rding}}, \bibinfo {author} {\bibfnamefont {N.}~\bibnamefont
  {Schwaller}}, \bibinfo {author} {\bibfnamefont {C.~L.}\ \bibnamefont {Chan}},
  \bibinfo {author} {\bibfnamefont {S.~Y.}\ \bibnamefont {Chang}}, \bibinfo
  {author} {\bibfnamefont {S.}~\bibnamefont {Bosch}}, \bibinfo {author}
  {\bibfnamefont {F.}~\bibnamefont {Gessler}}, \bibinfo {author} {\bibfnamefont
  {W.~R.}\ \bibnamefont {Laborde}}, \bibinfo {author} {\bibfnamefont {J.~N.}\
  \bibnamefont {Hernandez}}, \bibinfo {author} {\bibfnamefont {X.}~\bibnamefont
  {Si}}, \bibinfo {author} {\bibfnamefont {M.-A.}\ \bibnamefont {Dupertuis}},
  \emph {et~al.},\ }\href@noop {} {\bibfield  {journal} {\bibinfo  {journal}
  {Entropy}\ }\textbf {\bibinfo {volume} {23}},\ \bibinfo {pages} {797}
  (\bibinfo {year} {2021})}\BibitemShut {NoStop}%
\bibitem [{\citenamefont {Horodecki}\ \emph {et~al.}(1998)\citenamefont
  {Horodecki}, \citenamefont {Horodecki},\ and\ \citenamefont
  {Horodecki}}]{horodecki1998mixed}%
  \BibitemOpen
  \bibfield  {author} {\bibinfo {author} {\bibfnamefont {M.}~\bibnamefont
  {Horodecki}}, \bibinfo {author} {\bibfnamefont {P.}~\bibnamefont
  {Horodecki}}, \ and\ \bibinfo {author} {\bibfnamefont {R.}~\bibnamefont
  {Horodecki}},\ }\href@noop {} {\bibfield  {journal} {\bibinfo  {journal}
  {Phys. Rev. Lett.}\ }\textbf {\bibinfo {volume} {80}},\ \bibinfo {pages}
  {5239} (\bibinfo {year} {1998})}\BibitemShut {NoStop}%
\bibitem [{\citenamefont {Grasselli}(2021)}]{grasselli2021quantum}%
  \BibitemOpen
  \bibfield  {author} {\bibinfo {author} {\bibfnamefont {F.}~\bibnamefont
  {Grasselli}},\ }\href@noop {} {\emph {\bibinfo {title} {Quantum
  cryptography}}}\ (\bibinfo  {publisher} {Springer International Publishing},\
  \bibinfo {address} {Basel},\ \bibinfo {year} {2021})\BibitemShut {NoStop}%
\bibitem [{\citenamefont {Yu}\ and\ \citenamefont
  {Eberly}(2007)}]{yu2005evolution}%
  \BibitemOpen
  \bibfield  {author} {\bibinfo {author} {\bibfnamefont {T.}~\bibnamefont
  {Yu}}\ and\ \bibinfo {author} {\bibfnamefont {J.}~\bibnamefont {Eberly}},\
  }\href@noop {} {\bibfield  {journal} {\bibinfo  {journal} {Quantum Inf.
  Comput.}\ }\textbf {\bibinfo {volume} {7}},\ \bibinfo {pages} {459} (\bibinfo
  {year} {2007})}\BibitemShut {NoStop}%
\bibitem [{\citenamefont {Batle}\ and\ \citenamefont
  {Casas}(2010)}]{batle2010nonlocality}%
  \BibitemOpen
  \bibfield  {author} {\bibinfo {author} {\bibfnamefont {J.}~\bibnamefont
  {Batle}}\ and\ \bibinfo {author} {\bibfnamefont {M.}~\bibnamefont {Casas}},\
  }\href@noop {} {\bibfield  {journal} {\bibinfo  {journal} {Phys. Rev. A}\
  }\textbf {\bibinfo {volume} {82}},\ \bibinfo {pages} {062101} (\bibinfo
  {year} {2010})}\BibitemShut {NoStop}%
\bibitem [{\citenamefont {Rafsanjani}\ \emph {et~al.}(2012)\citenamefont
  {Rafsanjani}, \citenamefont {Huber}, \citenamefont {Broadbent},\ and\
  \citenamefont {Eberly}}]{rafsanjani2012genuinely}%
  \BibitemOpen
  \bibfield  {author} {\bibinfo {author} {\bibfnamefont {S.~H.}\ \bibnamefont
  {Rafsanjani}}, \bibinfo {author} {\bibfnamefont {M.}~\bibnamefont {Huber}},
  \bibinfo {author} {\bibfnamefont {C.~J.}\ \bibnamefont {Broadbent}}, \ and\
  \bibinfo {author} {\bibfnamefont {J.~H.}\ \bibnamefont {Eberly}},\
  }\href@noop {} {\bibfield  {journal} {\bibinfo  {journal} {Phys. Rev. A}\
  }\textbf {\bibinfo {volume} {86}},\ \bibinfo {pages} {062303} (\bibinfo
  {year} {2012})}\BibitemShut {NoStop}%
\bibitem [{\citenamefont {Batle}\ \emph {et~al.}(2016)\citenamefont {Batle},
  \citenamefont {Ooi}, \citenamefont {Farouk},\ and\ \citenamefont
  {Abdalla}}]{batle2016nonlocality}%
  \BibitemOpen
  \bibfield  {author} {\bibinfo {author} {\bibfnamefont {J.}~\bibnamefont
  {Batle}}, \bibinfo {author} {\bibfnamefont {C.~R.}\ \bibnamefont {Ooi}},
  \bibinfo {author} {\bibfnamefont {A.}~\bibnamefont {Farouk}}, \ and\ \bibinfo
  {author} {\bibfnamefont {S.}~\bibnamefont {Abdalla}},\ }\href@noop {}
  {\bibfield  {journal} {\bibinfo  {journal} {Quantum Inf. Process.}\ }\textbf
  {\bibinfo {volume} {15}},\ \bibinfo {pages} {1553} (\bibinfo {year}
  {2016})}\BibitemShut {NoStop}%
\bibitem [{\citenamefont {Uchida}\ \emph {et~al.}(2015)\citenamefont {Uchida},
  \citenamefont {Bertlmann},\ and\ \citenamefont
  {Hiesmayr}}]{uchida2015entangled}%
  \BibitemOpen
  \bibfield  {author} {\bibinfo {author} {\bibfnamefont {G.}~\bibnamefont
  {Uchida}}, \bibinfo {author} {\bibfnamefont {R.~A.}\ \bibnamefont
  {Bertlmann}}, \ and\ \bibinfo {author} {\bibfnamefont {B.~C.}\ \bibnamefont
  {Hiesmayr}},\ }\href@noop {} {\bibfield  {journal} {\bibinfo  {journal}
  {Phys. Lett. A}\ }\textbf {\bibinfo {volume} {379}},\ \bibinfo {pages} {2698}
  (\bibinfo {year} {2015})}\BibitemShut {NoStop}%
\bibitem [{\citenamefont {Bock}\ \emph {et~al.}(2018)\citenamefont {Bock},
  \citenamefont {Eich}, \citenamefont {Kucera}, \citenamefont {Kreis},
  \citenamefont {Lenhard}, \citenamefont {Becher},\ and\ \citenamefont
  {Eschner}}]{bock2018high}%
  \BibitemOpen
  \bibfield  {author} {\bibinfo {author} {\bibfnamefont {M.}~\bibnamefont
  {Bock}}, \bibinfo {author} {\bibfnamefont {P.}~\bibnamefont {Eich}}, \bibinfo
  {author} {\bibfnamefont {S.}~\bibnamefont {Kucera}}, \bibinfo {author}
  {\bibfnamefont {M.}~\bibnamefont {Kreis}}, \bibinfo {author} {\bibfnamefont
  {A.}~\bibnamefont {Lenhard}}, \bibinfo {author} {\bibfnamefont
  {C.}~\bibnamefont {Becher}}, \ and\ \bibinfo {author} {\bibfnamefont
  {J.}~\bibnamefont {Eschner}},\ }\href@noop {} {\bibfield  {journal} {\bibinfo
   {journal} {Nat. Commun.}\ }\textbf {\bibinfo {volume} {9}},\ \bibinfo
  {pages} {1998} (\bibinfo {year} {2018})}\BibitemShut {NoStop}%
\bibitem [{\citenamefont {Pilnyak}\ \emph {et~al.}(2020)\citenamefont
  {Pilnyak}, \citenamefont {Schechter}, \citenamefont {Pleban}, \citenamefont
  {Vidro}, \citenamefont {Zilber}, \citenamefont {Cohen}, \citenamefont
  {Istrati},\ and\ \citenamefont {Eisenberg}}]{pilnyak2020encoding}%
  \BibitemOpen
  \bibfield  {author} {\bibinfo {author} {\bibfnamefont {Y.}~\bibnamefont
  {Pilnyak}}, \bibinfo {author} {\bibfnamefont {Y.}~\bibnamefont {Schechter}},
  \bibinfo {author} {\bibfnamefont {D.}~\bibnamefont {Pleban}}, \bibinfo
  {author} {\bibfnamefont {L.}~\bibnamefont {Vidro}}, \bibinfo {author}
  {\bibfnamefont {P.}~\bibnamefont {Zilber}}, \bibinfo {author} {\bibfnamefont
  {L.}~\bibnamefont {Cohen}}, \bibinfo {author} {\bibfnamefont
  {D.}~\bibnamefont {Istrati}}, \ and\ \bibinfo {author} {\bibfnamefont
  {H.}~\bibnamefont {Eisenberg}},\ }in\ \href@noop {} {\emph {\bibinfo
  {booktitle} {Quantum 2.0}}}\ (\bibinfo {organization} {Optica Publishing
  Group},\ \bibinfo {year} {2020})\ pp.\ \bibinfo {pages}
  {QTu8A--16}\BibitemShut {NoStop}%
\bibitem [{\citenamefont {Gurvits}(2003)}]{gurvits2003classical}%
  \BibitemOpen
  \bibfield  {author} {\bibinfo {author} {\bibfnamefont {L.}~\bibnamefont
  {Gurvits}},\ }in\ \href@noop {} {\emph {\bibinfo {booktitle} {Proceedings of
  the thirty-fifth annual ACM symposium on Theory of computing}}}\ (\bibinfo
  {year} {2003})\ pp.\ \bibinfo {pages} {10--19}\BibitemShut {NoStop}%
\bibitem [{\citenamefont {Gharibian}(2010)}]{gharibian2008strong}%
  \BibitemOpen
  \bibfield  {author} {\bibinfo {author} {\bibfnamefont {S.}~\bibnamefont
  {Gharibian}},\ }\href@noop {} {\bibfield  {journal} {\bibinfo  {journal}
  {Quantum Inf. Comput.}\ }\textbf {\bibinfo {volume} {10}},\ \bibinfo {pages}
  {343} (\bibinfo {year} {2010})}\BibitemShut {NoStop}%
\bibitem [{\citenamefont {Popp}\ and\ \citenamefont
  {Hiesmayr}(2022)}]{popp2022almost}%
  \BibitemOpen
  \bibfield  {author} {\bibinfo {author} {\bibfnamefont {C.}~\bibnamefont
  {Popp}}\ and\ \bibinfo {author} {\bibfnamefont {B.~C.}\ \bibnamefont
  {Hiesmayr}},\ }\href@noop {} {\bibfield  {journal} {\bibinfo  {journal}
  {Scientific Reports}\ }\textbf {\bibinfo {volume} {12}},\ \bibinfo {pages}
  {12472} (\bibinfo {year} {2022})}\BibitemShut {NoStop}%
\bibitem [{\citenamefont {G{\"u}hne}\ and\ \citenamefont
  {T{\'o}th}(2009)}]{guhne2009entanglement}%
  \BibitemOpen
  \bibfield  {author} {\bibinfo {author} {\bibfnamefont {O.}~\bibnamefont
  {G{\"u}hne}}\ and\ \bibinfo {author} {\bibfnamefont {G.}~\bibnamefont
  {T{\'o}th}},\ }\href@noop {} {\bibfield  {journal} {\bibinfo  {journal}
  {Physics Reports}\ }\textbf {\bibinfo {volume} {474}},\ \bibinfo {pages} {1}
  (\bibinfo {year} {2009})}\BibitemShut {NoStop}%
\bibitem [{\citenamefont {Sarbicki}\ \emph {et~al.}(2020)\citenamefont
  {Sarbicki}, \citenamefont {Scala},\ and\ \citenamefont
  {Chru{\'s}ci{\'n}ski}}]{class-of-ews}%
  \BibitemOpen
  \bibfield  {author} {\bibinfo {author} {\bibfnamefont {G.}~\bibnamefont
  {Sarbicki}}, \bibinfo {author} {\bibfnamefont {G.}~\bibnamefont {Scala}}, \
  and\ \bibinfo {author} {\bibfnamefont {D.}~\bibnamefont
  {Chru{\'s}ci{\'n}ski}},\ }\href@noop {} {\bibfield  {journal} {\bibinfo
  {journal} {Physical Review A}\ }\textbf {\bibinfo {volume} {101}},\ \bibinfo
  {pages} {012341} (\bibinfo {year} {2020})}\BibitemShut {NoStop}%
\bibitem [{\citenamefont {Chen}\ and\ \citenamefont
  {Wu}(2003)}]{chen2002matrix}%
  \BibitemOpen
  \bibfield  {author} {\bibinfo {author} {\bibfnamefont {K.}~\bibnamefont
  {Chen}}\ and\ \bibinfo {author} {\bibfnamefont {L.-A.}\ \bibnamefont {Wu}},\
  }\href@noop {} {\bibfield  {journal} {\bibinfo  {journal} {Quantum Inf.
  Comput.}\ }\textbf {\bibinfo {volume} {3}},\ \bibinfo {pages} {193} (\bibinfo
  {year} {2003})}\BibitemShut {NoStop}%
\bibitem [{\citenamefont {Rudolph}(2005)}]{rudolph2005further}%
  \BibitemOpen
  \bibfield  {author} {\bibinfo {author} {\bibfnamefont {O.}~\bibnamefont
  {Rudolph}},\ }\href@noop {} {\bibfield  {journal} {\bibinfo  {journal}
  {Quantum Information Processing}\ }\textbf {\bibinfo {volume} {4}},\ \bibinfo
  {pages} {219} (\bibinfo {year} {2005})}\BibitemShut {NoStop}%
\bibitem [{\citenamefont {De~Vicente}(2007)}]{vicente2007separability}%
  \BibitemOpen
  \bibfield  {author} {\bibinfo {author} {\bibfnamefont {J.~I.}\ \bibnamefont
  {De~Vicente}},\ }\href@noop {} {\bibfield  {journal} {\bibinfo  {journal}
  {Quantum Info. Comput.}\ }\textbf {\bibinfo {volume} {7}},\ \bibinfo {pages}
  {624–638} (\bibinfo {year} {2007})}\BibitemShut {NoStop}%
\bibitem [{\citenamefont {de~Vicente}(2008)}]{de2008further}%
  \BibitemOpen
  \bibfield  {author} {\bibinfo {author} {\bibfnamefont {J.~I.}\ \bibnamefont
  {de~Vicente}},\ }\href@noop {} {\bibfield  {journal} {\bibinfo  {journal}
  {Journal of Physics A: Mathematical and Theoretical}\ }\textbf {\bibinfo
  {volume} {41}},\ \bibinfo {pages} {065309} (\bibinfo {year}
  {2008})}\BibitemShut {NoStop}%
\bibitem [{\citenamefont {\.{Z}yczkowski}\ and\ \citenamefont
  {Bengtsson}(2006)}]{zyczkowski2006geometry}%
  \BibitemOpen
  \bibfield  {author} {\bibinfo {author} {\bibfnamefont {K.}~\bibnamefont
  {\.{Z}yczkowski}}\ and\ \bibinfo {author} {\bibfnamefont {I.}~\bibnamefont
  {Bengtsson}},\ }\href@noop {} {\emph {\bibinfo {title} {Geometry of quantum
  states}}}\ (\bibinfo  {publisher} {Cambridge University Press},\ \bibinfo
  {year} {2006})\BibitemShut {NoStop}%
\bibitem [{\citenamefont {Werner}(1989)}]{werner1989quantum}%
  \BibitemOpen
  \bibfield  {author} {\bibinfo {author} {\bibfnamefont {R.~F.}\ \bibnamefont
  {Werner}},\ }\href@noop {} {\bibfield  {journal} {\bibinfo  {journal} {Phys.
  Rev. A}\ }\textbf {\bibinfo {volume} {40}},\ \bibinfo {pages} {4277}
  (\bibinfo {year} {1989})}\BibitemShut {NoStop}%
\bibitem [{\citenamefont {Rudin}(2017)}]{rudin2017fourier}%
  \BibitemOpen
  \bibfield  {author} {\bibinfo {author} {\bibfnamefont {W.}~\bibnamefont
  {Rudin}},\ }\href@noop {} {\emph {\bibinfo {title} {Fourier analysis on
  groups}}}\ (\bibinfo  {publisher} {Courier Dover Publications},\ \bibinfo
  {year} {2017})\BibitemShut {NoStop}%
\bibitem [{\citenamefont {Maruyama}(2017)}]{maruyama2017herglotz}%
  \BibitemOpen
  \bibfield  {author} {\bibinfo {author} {\bibfnamefont {T.}~\bibnamefont
  {Maruyama}},\ }\href@noop {} {\bibfield  {journal} {\bibinfo  {journal}
  {Journal of the Operations Research Society of Japan}\ }\textbf {\bibinfo
  {volume} {60}},\ \bibinfo {pages} {122} (\bibinfo {year} {2017})}\BibitemShut
  {NoStop}%
\bibitem [{\citenamefont {Chru\ifmmode \acute{s}\else
  \'{s}\fi{}ci\ifmmode~\acute{n}\else \'{n}\fi{}ski}\ and\ \citenamefont
  {Kossakowski}(2010)}]{BDStatesSymmetry}%
  \BibitemOpen
  \bibfield  {author} {\bibinfo {author} {\bibfnamefont {D.}~\bibnamefont
  {Chru\ifmmode \acute{s}\else \'{s}\fi{}ci\ifmmode~\acute{n}\else
  \'{n}\fi{}ski}}\ and\ \bibinfo {author} {\bibfnamefont {A.}~\bibnamefont
  {Kossakowski}},\ }\href {\doibase 10.1103/PhysRevA.82.064301} {\bibfield
  {journal} {\bibinfo  {journal} {Physical Review A}\ }\textbf {\bibinfo
  {volume} {82}},\ \bibinfo {pages} {064301} (\bibinfo {year}
  {2010})}\BibitemShut {NoStop}%
\bibitem [{\citenamefont {Popp}\ and\ \citenamefont
  {Hiesmayr}(2023)}]{popp2023special}%
  \BibitemOpen
  \bibfield  {author} {\bibinfo {author} {\bibfnamefont {C.}~\bibnamefont
  {Popp}}\ and\ \bibinfo {author} {\bibfnamefont {B.~C.}\ \bibnamefont
  {Hiesmayr}},\ }\href@noop {} {\bibfield  {journal} {\bibinfo  {journal}
  {arXiv:2307.10727}\ } (\bibinfo {year} {2023})}\BibitemShut {NoStop}%
\bibitem [{\citenamefont {Peres}(1996)}]{peres1996separability}%
  \BibitemOpen
  \bibfield  {author} {\bibinfo {author} {\bibfnamefont {A.}~\bibnamefont
  {Peres}},\ }\href@noop {} {\bibfield  {journal} {\bibinfo  {journal}
  {Physical Review Letters}\ }\textbf {\bibinfo {volume} {77}},\ \bibinfo
  {pages} {1413} (\bibinfo {year} {1996})}\BibitemShut {NoStop}%
\bibitem [{\citenamefont {Horodecki}(1997)}]{horodecki1997separability}%
  \BibitemOpen
  \bibfield  {author} {\bibinfo {author} {\bibfnamefont {P.}~\bibnamefont
  {Horodecki}},\ }\href@noop {} {\bibfield  {journal} {\bibinfo  {journal}
  {Physics Letters A}\ }\textbf {\bibinfo {volume} {232}},\ \bibinfo {pages}
  {333} (\bibinfo {year} {1997})}\BibitemShut {NoStop}%
\bibitem [{\citenamefont {Bru{\ss}}\ and\ \citenamefont
  {Peres}(2000)}]{bruss2000construction}%
  \BibitemOpen
  \bibfield  {author} {\bibinfo {author} {\bibfnamefont {D.}~\bibnamefont
  {Bru{\ss}}}\ and\ \bibinfo {author} {\bibfnamefont {A.}~\bibnamefont
  {Peres}},\ }\href@noop {} {\bibfield  {journal} {\bibinfo  {journal} {Phys.
  Rev. A}\ }\textbf {\bibinfo {volume} {61}},\ \bibinfo {pages} {030301(R)}
  (\bibinfo {year} {2000})}\BibitemShut {NoStop}%
\bibitem [{\citenamefont {Smolin}(2001)}]{smolin2001four}%
  \BibitemOpen
  \bibfield  {author} {\bibinfo {author} {\bibfnamefont {J.~A.}\ \bibnamefont
  {Smolin}},\ }\href@noop {} {\bibfield  {journal} {\bibinfo  {journal} {Phys.
  Rev. A}\ }\textbf {\bibinfo {volume} {63}},\ \bibinfo {pages} {032306}
  (\bibinfo {year} {2001})}\BibitemShut {NoStop}%
\bibitem [{\citenamefont {Lockhart}\ \emph {et~al.}(2018)\citenamefont
  {Lockhart}, \citenamefont {G{\"u}hne},\ and\ \citenamefont
  {Severini}}]{lockhart2018entanglement}%
  \BibitemOpen
  \bibfield  {author} {\bibinfo {author} {\bibfnamefont {J.}~\bibnamefont
  {Lockhart}}, \bibinfo {author} {\bibfnamefont {O.}~\bibnamefont {G{\"u}hne}},
  \ and\ \bibinfo {author} {\bibfnamefont {S.}~\bibnamefont {Severini}},\
  }\href@noop {} {\bibfield  {journal} {\bibinfo  {journal} {Phys. Rev. A}\
  }\textbf {\bibinfo {volume} {97}},\ \bibinfo {pages} {062340} (\bibinfo
  {year} {2018})}\BibitemShut {NoStop}%
\bibitem [{\citenamefont {Benatti}\ \emph {et~al.}(2004)\citenamefont
  {Benatti}, \citenamefont {Floreanini},\ and\ \citenamefont
  {Piani}}]{benatti2004non}%
  \BibitemOpen
  \bibfield  {author} {\bibinfo {author} {\bibfnamefont {F.}~\bibnamefont
  {Benatti}}, \bibinfo {author} {\bibfnamefont {R.}~\bibnamefont {Floreanini}},
  \ and\ \bibinfo {author} {\bibfnamefont {M.}~\bibnamefont {Piani}},\
  }\href@noop {} {\bibfield  {journal} {\bibinfo  {journal} {Open Syst. Inf.
  Dyn.}\ }\textbf {\bibinfo {volume} {11}},\ \bibinfo {pages} {325} (\bibinfo
  {year} {2004})}\BibitemShut {NoStop}%
\bibitem [{\citenamefont {Diophantus}(0 AD)}]{diophantus300arithmetica}%
  \BibitemOpen
  \bibfield  {author} {\bibinfo {author} {\bibnamefont {Diophantus}},\
  }\href@noop {} {\emph {\bibinfo {title} {Arithmetica}}}\ (\bibinfo {address}
  {Alexandria},\ \bibinfo {year} {around 300 AD})\BibitemShut {NoStop}%
\bibitem [{\citenamefont {Kraus}\ \emph {et~al.}(2000)\citenamefont {Kraus},
  \citenamefont {Cirac}, \citenamefont {Karnas},\ and\ \citenamefont
  {Lewenstein}}]{kraus2000separability}%
  \BibitemOpen
  \bibfield  {author} {\bibinfo {author} {\bibfnamefont {B.}~\bibnamefont
  {Kraus}}, \bibinfo {author} {\bibfnamefont {J.~I.}\ \bibnamefont {Cirac}},
  \bibinfo {author} {\bibfnamefont {S.}~\bibnamefont {Karnas}}, \ and\ \bibinfo
  {author} {\bibfnamefont {M.}~\bibnamefont {Lewenstein}},\ }\href {\doibase
  10.1103/PhysRevA.61.062302} {\bibfield  {journal} {\bibinfo  {journal} {Phys.
  Rev. A}\ }\textbf {\bibinfo {volume} {61}},\ \bibinfo {pages} {062302}
  (\bibinfo {year} {2000})}\BibitemShut {NoStop}%
\bibitem [{\citenamefont {Wilde}(2013)}]{wilde2013quantum}%
  \BibitemOpen
  \bibfield  {author} {\bibinfo {author} {\bibfnamefont {M.~M.}\ \bibnamefont
  {Wilde}},\ }\href@noop {} {\emph {\bibinfo {title} {Quantum information
  theory}}}\ (\bibinfo  {publisher} {Cambridge University Press},\ \bibinfo
  {year} {2013})\BibitemShut {NoStop}%
\bibitem [{\citenamefont {Gittsovich}(2010)}]{gittsovich2010quantum}%
  \BibitemOpen
  \bibfield  {author} {\bibinfo {author} {\bibfnamefont {O.}~\bibnamefont
  {Gittsovich}},\ }\href@noop {} {\bibfield  {journal} {\bibinfo  {journal}
  {PhD thesis, Innsbruck}\ } (\bibinfo {year} {2010})}\BibitemShut {NoStop}%
\bibitem [{\citenamefont {Kampermann}\ \emph {et~al.}(2012)\citenamefont
  {Kampermann}, \citenamefont {G{\"u}hne}, \citenamefont {Wilmott},\ and\
  \citenamefont {Bru{\ss}}}]{kampermann2012algorithm}%
  \BibitemOpen
  \bibfield  {author} {\bibinfo {author} {\bibfnamefont {H.}~\bibnamefont
  {Kampermann}}, \bibinfo {author} {\bibfnamefont {O.}~\bibnamefont
  {G{\"u}hne}}, \bibinfo {author} {\bibfnamefont {C.}~\bibnamefont {Wilmott}},
  \ and\ \bibinfo {author} {\bibfnamefont {D.}~\bibnamefont {Bru{\ss}}},\
  }\href@noop {} {\bibfield  {journal} {\bibinfo  {journal} {Phys. Rev. A}\
  }\textbf {\bibinfo {volume} {86}},\ \bibinfo {pages} {032307} (\bibinfo
  {year} {2012})}\BibitemShut {NoStop}%
\bibitem [{\citenamefont {Wyderka}\ and\ \citenamefont
  {Ketterer}(2023)}]{wyderka2023probing}%
  \BibitemOpen
  \bibfield  {author} {\bibinfo {author} {\bibfnamefont {N.}~\bibnamefont
  {Wyderka}}\ and\ \bibinfo {author} {\bibfnamefont {A.}~\bibnamefont
  {Ketterer}},\ }\href@noop {} {\bibfield  {journal} {\bibinfo  {journal} {PRX
  Quantum}\ }\textbf {\bibinfo {volume} {4}},\ \bibinfo {pages} {020325}
  (\bibinfo {year} {2023})}\BibitemShut {NoStop}%
\bibitem [{\citenamefont {Liu}\ \emph {et~al.}(2023)\citenamefont {Liu},
  \citenamefont {He}, \citenamefont {Huber}, \citenamefont {G{\"u}hne},\ and\
  \citenamefont {Vitagliano}}]{liu2023characterizing}%
  \BibitemOpen
  \bibfield  {author} {\bibinfo {author} {\bibfnamefont {S.}~\bibnamefont
  {Liu}}, \bibinfo {author} {\bibfnamefont {Q.}~\bibnamefont {He}}, \bibinfo
  {author} {\bibfnamefont {M.}~\bibnamefont {Huber}}, \bibinfo {author}
  {\bibfnamefont {O.}~\bibnamefont {G{\"u}hne}}, \ and\ \bibinfo {author}
  {\bibfnamefont {G.}~\bibnamefont {Vitagliano}},\ }\href@noop {} {\bibfield
  {journal} {\bibinfo  {journal} {PRX Quantum}\ }\textbf {\bibinfo {volume}
  {4}},\ \bibinfo {pages} {020324} (\bibinfo {year} {2023})}\BibitemShut
  {NoStop}%
\end{thebibliography}%

\appendix

\section{Numerical construction of entanglement witnesses}

The numerical optimization used to construct entanglement witnesses corresponding to Fig.~\ref{fig:numWitn} was carried out using the \texttt{scipy.optimize.minimize} function in \texttt{Python 3.9.5}. We employed the Heisenberg-Weyl operator basis, such that 
\begin{equation}\begin{aligned}
    B^A_i &:= X_A^{\lfloor i / d_A\rfloor} Z_A^i,
    \text{ and }\\
    B^B_j &:= X_B^{\lfloor j / d_B\rfloor} Z_B^{-j},
\end{aligned}\end{equation}
where the exponents on the RHS may be taken mod $d_A$ and mod $d_B$, respectively.
We took the $\ell$ entries in $U$ corresponding to the largest coefficients (by absolute value) in the correlation matrix of $\ket{\phi^{00}}$ (not counting the normalization $\one_A\otimes \one_B$) to be non-vanishing.\par
We now describe the optimization procedure that we employed to construct the optimal SSC witness to detect the bound entangled state $\rho_{P^{(1)}}$ constructed from the probability matrix in Eq.~(\ref{eq:bds46a}). To that end, we discretized the parameter space of possible values of $0\leq x\leq 2$ and $0\leq y \leq 2$ in steps of 200 each, yielding a total of $40000$ points. For each point, we constructed the optimal isometry as explained in Section~\ref{sec:sscwitnessfornon} and evaluated the noise robustness $\varepsilon_\text{max}(x,y)$ with a divide and conquer scheme. It turns out that the largest noise robustness of $0.1295$ is assumed for a one dimensional subset of values for $x$ and $y$ as displayed in Fig.~\ref{fig:noise}. Identifying the fixed orthonormal basis in $\mathcal{H}$ with the canonical unit vectors in $\mathbb{C}^{d_A}\otimes \mathbb{C}^{d_B}=\mathbb{C}^{d_A \times d_B}$, the corresponding witness is given as a matrix
\begin{align}
    W = \sum_{k,l=0}^{d_Ad_B-1} m_{k,l} \ketbra{k}{l}.
\end{align}
We print the non-vanishing coefficients
(rounded to 4 digits):

\allowdisplaybreaks
\begin{align*} 
m_{0,0} &= m_{7,7} = m_{16,16} = m_{23,23} = 0.2039, \\ 
m_{0,7} &= m_{4,11} = m_{6,5} = m_{11,12} = m_{12,19}\\
&= m_{13,6} = m_{16,23} = m_{17,10} = m_{18,17} = 0.4\mathrm{i}, \\ 
m_{0,14} &= m_{2,16} = m_{7,21} = m_{9,23} = m_{14,0}\\
&= m_{16,2} = m_{21,7} = m_{23,9} = -0.4472, \\ 
m_{0,21} &= m_{2,23} = m_{18,3} = m_{20,5} = -0.4472\mathrm{i}, \\ 
m_{1,1} &= m_{5,5} = m_{6,6} = m_{10,10} = m_{13,13}\\
&= m_{17,17} = m_{18,18} = m_{22,22} = 1.2015, \\ 
m_{1,8} &= m_{8,15} = m_{15,22} = 0.7099\mathrm{i}, \\ 
m_{1,15} &= m_{3,17} = m_{6,20} = m_{8,22} = m_{15,1}\\
&= m_{17,3} = m_{20,6} = m_{22,8} = 0.4472, \\ 
m_{1,22} &= m_{4,19} = -0.6325\mathrm{i}, \\ 
m_{2,2} &= m_{9,9} = m_{14,14} = m_{21,21} = 0.6015,\addtocounter{equation}{1}\tag{\theequation} \\ 
m_{2,9} &= m_{7,14} = m_{9,16} = m_{10,3}\\
&= m_{14,21} = m_{20,13} = 0.555\mathrm{i}, \\ 
m_{3,3} &= m_{8,8} = m_{15,15} = m_{20,20} = 1.3318, \\ 
m_{3,10} &= m_{9,2} = m_{13,20} = m_{14,7}\\
&= m_{16,9} = m_{21,14} = -0.555\mathrm{i}, \\ 
m_{3,18} &= m_{5,20} = m_{21,0} = m_{23,2} = 0.4472\mathrm{i}, \\ 
m_{4,4} &= m_{11,11} = m_{12,12} = m_{19,19} = 1.4689, \\ 
m_{4,12} &= m_{5,13} = m_{10,18} = m_{11,19} = m_{12,4}\\
&= m_{13,5} = m_{18,10} = m_{19,11} = 0.6325, \\ 
m_{5,6} &= m_{6,13} = m_{7,0} = m_{10,17} = m_{11,4}\\
&= m_{12,11} = m_{17,18} = m_{19,12} = m_{23,16} = -0.4\mathrm{i}, \\ 
m_{8,1} &= m_{15,8} = m_{22,15} = -0.7099\mathrm{i}, \\ 
m_{19,4} &= m_{22,1} = 0.6325\mathrm{i}. \\ 
\end{align*}

\end{document}